\documentclass[review,12pt]{elsarticle}
\usepackage{colordvi}
%% Use the option review to obtain double line spacing Eldwin00
%% \documentclass[authoryear,preprint,review,12pt]{elsarticle}

%% Use the options 1p,twocolumn; 3p; 3p,twocolumn; 5p; or 5p,twocolumn
%% for a journal layout:
%% \documentclass[final,1p,times]{elsarticle}
%% \documentclass[final,1p,times,twocolumn]{elsarticle}
%% \documentclass[final,3p,times]{elsarticle}
%% \documentclass[final,3p,times,twocolumn]{elsarticle}
%% \documentclass[final,5p,times]{elsarticle}
%% \documentclass[final,5p,times,twocolumn]{elsarticle}

\usepackage{amssymb}
\usepackage{amsthm,amsmath}
\usepackage{mathrsfs}
\usepackage{accents}
\usepackage{graphicx}
\usepackage{epstopdf}
\newtheorem{thm}{Theorem}
\newtheorem{lem}{Lemma}

\newtheorem{asu}{Assumption}
\newtheorem{col}{Corollary}
\newtheorem{rem}{Remark}
\newtheorem{propo}{Proposition}
\newtheorem{nota}{Notation}
\newcommand{\onetom}{1,\cdots,m}

\allowdisplaybreaks

%\journal{Neurocomputing}

\begin{document}

\begin{frontmatter}

\title{Adaptive algorithms for synchronization, consensus of multi-agents and anti-synchronization of direct complex networks\tnoteref{tt}} \tnotetext[tt]{This work is jointly supported by the National Key $R\&D$ Program of China
(No. 2018AAA010030), National Natural Sciences Foundation of China under
Grant (No. 61673119 and 61673298), STCSM (No. 19JC1420101), Shanghai Municipal Science and Technology Major Project under Grant 2018SHZDZX01 and ZJLab, the Key Project of Shanghai Science and Techonology under Grant 16JC1420402.}

\author[lwl-1,lwl-2,lwl-3,lwl-4]{Wenlian Lu}
\ead{wenlian@fudan.edu.cn}
\author[lxw-1]{Xiwei Liu}
\ead{xwliu@tongji.edu.cn}
\author[ctp-1,lwl-1]{Tianping Chen\corref{tpchen}}
\ead{tchen@fudan.edu.cn}

\address[lwl-1]{School of Mathematical Sciences, Fudan University, Shanghai 200433, China}
\address[lwl-2]{Institute of Science and Technology for Brain-Inspired Intelligence, Fudan University, Shanghai 200433, China} \address[lwl-3]{Shanghai Key Laboratory for Contemporary Applied Mathematics and Laboratory of Mathematics for Nonlinear Science, Fudan University, Shanghai 200433, China}
\address[lwl-4]{Shanghai Center for Mathematical Sciences, Fudan University, Shanghai 200433, China}
\address[lxw-1]{Department of Computer Science and Technology, Tongji University, Shanghai 201804, China}
\address[ctp-1]{School of Computer Science, Fudan University, Shanghai 200433, China}

\cortext[tpchen]{Corresponding author. E-mail address: tchen@fudan.edu.cn}

\begin{abstract}
In this paper, we discuss distributed adaptive algorithms for synchronization of complex networks, consensus of multi-agents with or without pinning controller. The dynamics of individual node is governed by generalized QUAD condition. We design  new algorithms, which can keep the left eigenvector of the adaptive coupling matrix corresponding to the zero eigenvalue invariant. Based on this invariance, various distributive adaptive synchronization, consensus, anti-synchronization models are given.
\end{abstract}

\begin{keyword}
Adaptive \sep Distributed algorithm \sep Consensus \sep Synchronization \sep Anti-synchronization.

%\PACS  \sep  \sep

\end{keyword}

\end{frontmatter}

%%\linenumbers

%% main text

\section{Introduction}
In more than twenty years, synchronization of complex networks, as a special case, consensus of multi-agents attracted many researchers. The general model is
\begin{equation}\label{model1}
\dot{z}^{i}(t)=\mathscr{F}(z^{i}(t))
+c\sum\limits_{j=1}^{m}l_{ij}\Gamma z^{j}(t),\quad
\end{equation}
where $z^{i}(t)\in \mathbb{R}^{n}$ is the state variable, $i=1,\cdots,m$,
$f:\mathbb{R}^n\rightarrow \mathbb{R}^{n}$, $l_{ij}\ge 0$, for $i\ne j$, and $l_{ii}=-\sum_{j=1}^ml_{ij}$, and $\Gamma\in \mathbb{R}^{n\times n}$.

%In such coupling, each vertex represents an individual element in the system, while edges represent the relations between them. That is, each vertex can receive the state information from its neighbors
%and send its state information to all its neighbors simultaneously.
%Thus, the evolution of the dynamics of each vertex
%is derived by two factors: its own dynamical property and its
%neighborhood¡¯s.

In \cite{Chen1}, authors proved that network (\ref{model1}) can synchronize if $c$ is large enough. However, large $c$ is impractical. \cite{Chen1b} also provided an example, showing that to reach synchronization, the theoretical value $c_{\min}=0.7$. However, numerical computation shows that if $c>0.06$, the synchronization has reached.
Thus, in \cite{Chen1b} the authors hoped to
find a sharp bound. It was also pointed that it is too difficult to prove it theoretically.%Similarly, in pinning process, it is also
%important to make the coupling strength as small as possible.

An effective approach is adaptive control. Adaptive control stability has been studied for a long time. It can be traced to the book \cite{Book1}. The core is to design updating laws. Early works on the adaptive control for synchronization of complex networks, readers can see \cite{Huang1}-\cite{Lu2} and others. For example, \cite{Zhou} proposed distributed
adaptive schemes for synchronization.
%These adaptive schemes seem intuitive, simple, generic, and
%available, but cannot be rigorously proved to be universally
%available.5

%However, for directed networks, if we want to keep some property of the coupling matrix, for example, the left eigenvector of %the coupling matrix corresponding to the zero eigenvalue, then how to design the distributed adaptive rules is a challenging %problem.

In this paper, we propose novel general adaptive algorithms, which are distributed and can be applied to synchronization, consensus of multi-agents and anti-synchronization of direct complex networks \cite{Chen1a}-\cite{book}.

\section{Some basic concepts and background}

Before giving main theoretical results, we need following result given in \cite{Chen1}.

%For the coupling matrix, following proposition plays key role (see Lemma 1 in \cite{Chen1}). For detail proof, readers can refer to \cite{Chen1a}.
\begin{propo}\label{p1}
Assume $L$ is connected, then for eigenvalue $0$, $[1,\dots,1]^{T}$ is the right eigenvector, $\theta=[\theta_{1},\cdots,\theta_{m}]^{T}\in
\mathbb{R}^{m}$ is the left eigenvector, $\theta_{i}>0$. In the following, we let
$\sum\limits_{i=1}^{m}\theta_{i}=1$.
\end{propo}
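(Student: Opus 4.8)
The plan is to reduce the statement to the Perron--Frobenius theorem for nonnegative irreducible matrices. The claim about the right eigenvector is immediate: by hypothesis $l_{ii}=-\sum_{j\ne i}l_{ij}$, so every row of $L$ sums to zero, hence $L[1,\dots,1]^{T}=0$ and $[1,\dots,1]^{T}$ is a right eigenvector for the eigenvalue $0$.

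For the left eigenvector I would first shift $L$ to make it nonnegative. Set $d=\max_{1\le i\le m}(-l_{ii})$ and define $B=L+dI$. The off-diagonal entries of $B$ coincide with $l_{ij}\ge 0$ for $i\ne j$, and the diagonal entries are $d+l_{ii}\ge 0$, so $B\ge 0$ entrywise. Since $B$ has exactly the same off-diagonal zero/nonzero pattern as $L$, the directed graph associated with $B$ is the one associated with $L$; as $L$ is connected, $B$ is irreducible.

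Next I would apply Perron--Frobenius to $B$: the spectral radius $\rho(B)$ is a simple eigenvalue, and the associated right and left eigenvectors may be chosen strictly positive and are, up to scaling, the only positive eigenvectors of $B$. But $B[1,\dots,1]^{T}=L[1,\dots,1]^{T}+d[1,\dots,1]^{T}=d[1,\dots,1]^{T}$, so $[1,\dots,1]^{T}$ is a positive right eigenvector of $B$; by the uniqueness part of Perron--Frobenius this forces $\rho(B)=d$ and shows $d$ is a simple eigenvalue of $B$, equivalently $0$ is a simple eigenvalue of $L$. Letting $\theta=[\theta_{1},\dots,\theta_{m}]^{T}>0$ be the left Perron eigenvector, we have $\theta^{T}B=d\,\theta^{T}$, and subtracting $d\theta^{T}$ gives $\theta^{T}L=0$: a strictly positive left eigenvector of $L$ for eigenvalue $0$. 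Simplicity of the eigenvalue makes $\theta$ unique up to a positive scalar, so we normalize by $\sum_{i=1}^{m}\theta_{i}=1$.

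The only delicate point is the passage from the combinatorial hypothesis that $L$ is connected to the algebraic hypothesis that $B$ is irreducible that Perron--Frobenius requires; once that identification is made, the rest is bookkeeping. If one prefers to avoid invoking the general Perron--Frobenius statement, an equivalent route is to observe that for $t>0$ the matrix $e^{tB}=e^{td}e^{tL}$ is \emph{strictly} positive (because $B$ is nonnegative and irreducible) and apply the Perron--Frobenius theorem to the positive matrix $e^{tB}$ instead, concluding as above; alternatively one can argue directly, showing by a graph walk that any nonnegative left null vector of $L$ vanishing at one index must vanish at all indices reachable from it, which by connectivity forces it to be zero unless it is everywhere positive.
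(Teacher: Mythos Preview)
Your argument is correct and is the standard route: shift $L$ to a nonnegative irreducible matrix and read off the strictly positive left eigenvector from Perron--Frobenius. One small clarification worth making explicit is that ``$L$ is connected'' must be read as \emph{strongly} connected (i.e.\ the directed graph of $L$ is strongly connected), since otherwise the left null vector of $L$ need not be strictly positive; you implicitly use this when passing from connectedness of $L$ to irreducibility of $B$.

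As for comparison with the paper: the paper does not supply its own proof of this proposition. It introduces the statement with ``we need following result given in \cite{Chen1}'' and simply quotes it from that reference, so there is no in-paper argument to compare against. Your Perron--Frobenius proof is exactly the kind of argument one expects to find in the cited source.
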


\begin{nota}
Define $\Theta=\mathrm{diag}\{\theta\}$, then the eigenvalues of $(\Theta L)^{s}=\frac{1}{2}(\Theta L+L^{T}\Theta)$ are sorted as: $0=\lambda_{1}>\lambda_{2}\ge\cdots\ge \lambda_{m}$.
\end{nota}

\begin{asu}\label{a1}
QUAD-Function class: Function $\mathscr{F}(\cdot)(\cdot)$ satisfies QUAD-condition, denoted as $\mathscr{F}(\cdot)\in (\Phi,P,\eta)$, if there exist positive definite matrix $P\in \mathbb{R}^{n\times n}$, matrix $\Phi\in \mathbb{R}^{n\times n}$, and scalar $\eta>0$, for any $z_1, z_2\in \mathbb{R}^{n}$,
\begin{align}\label{Q}
[z_1-z_2]^{T}P\big\{[\mathscr{F}(z_1)-\mathscr{F}(z_2)]-\Phi [z_1-z_2]\big\}
\le-\eta[z_1-z_2]^{T}[z_1-z_2].
\end{align}
\end{asu}

\begin{rem}
The concept of QUAD was first introduced in \cite{Chen1} for the case that $P$ and $\Phi$ are positive definite diagonal matrices. The QUAD condition introduced here is a natural generalization, which means that after some rotations, reflections or other transformations, not only enlargement, we can estimate the nonlinearity of $\mathscr{F}(\cdot)$ by matrices $P, \Phi$ and constant $\eta$. Readers can also refer to \cite{Lu1}.
\end{rem}

Here, we compare two definitions of QUAD-condition.

Let $P=Q^{T}JQ$ be its eigenvalue decomposition. Denote $\tilde{z}^i=Qz^i$, $\tilde{\mathscr{F}}(\tilde{z}^i)=Q\mathscr{F}(z^i)=Q\mathscr{F}(Q^{T}\tilde{z}^i)$, $\tilde{\Phi}=Q\Phi Q^{T}$. Then, (\ref{Q}) can be written as
\begin{align*}%\label{QUADG}
[\tilde{z}^i-\tilde{z}^j]^{T}J\big\{[\tilde{\mathscr{F}}(\tilde{z}^i)
-\tilde{\mathscr{F}}(\tilde{z}^j)]
-\tilde{\Phi} [\tilde{z}^i-\tilde{z}^j]\big\}\le-\eta[\tilde{z}^i-\tilde{z}^j]^{T}
[\tilde{z}^i-\tilde{z}^j].
\end{align*}

If $\tilde{\Phi}$ is also a positive diagonal matrix, which is equivalent to that $P$ and $\Phi$ have same eigenvectors. In this case, $\tilde{\mathscr{F}}(\cdot)$ satisfies the QUAD-condition introduced in \cite{Chen1}, where $J$ and $\tilde{\Phi}$ are positive diagonal matrices.

\begin{nota}
The kernel of a matrix $A$ is defined as $ker(A)=\{u|Au=0,~u\in R^{n}\}$, while the orthogonal complement of $ker(A)$ is denoted by $ker(A)^{\bot}$.
\end{nota}

\begin{asu}\label{a2}
If Assumption \ref{a1} holds, $P\Gamma$ is semi-positive definite on $\mathbb{R}^n$, where $\Gamma$ is defined in (\ref{model1}), and $P\Gamma$ is positive definite on subspace ${ker(P\Phi)^{\bot}}$.
\end{asu}

\section{Global synchronization analysis for distributed adaptive algorithm of complex networks}

In this section, we discuss following adaptive algorithm model
\begin{align}\label{model1a}
\left\{\begin{array}{ll}
\dot{z}^{i}(t)=\mathscr{F}(z^{i}(t))+\sum_{j=1}^{m}w_{ij}(t)\Gamma
z^{j}(t),\\
\dot{w}_{ij}(t)=
\theta_{i}^{-1}[z^{i}(t)-z^{j}(t)]^{T}P\Gamma[z^{i}(t)-z^{j}(t)],~j\in \mathscr{N}(i),\\
w_{ij}(0)=l_{ij},
\end{array}\right.
\end{align}
where $\mathscr{N}(i)$ means the neighborhood of $i$, $w_{ij}(t)$ is the directed adaptive coupling weight at time $t>0$, $w_{ii}(t)=-\sum_{j\ne i}w_{ij}(t)$, and
$\dot{w}_{ii}(t)=-\sum_{j\in N(i)}\dot{w}_{ij}(t).$%

\begin{thm}\label{adap1}
Under Assumptions \ref{a1} and \ref{a2}, algorithm (\ref{model1a}) can reach synchronization.
\end{thm}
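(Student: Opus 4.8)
The natural strategy is a Lyapunov argument in which the adaptive weights are themselves part of the Lyapunov function. Let $\bar z(t)=\sum_{i=1}^m\theta_i z^i(t)$ be the weighted average; since $[1,\dots,1]^T$ is the right null vector of the coupling matrix $W(t)=(w_{ij}(t))$ at every time (because $w_{ii}=-\sum_{j\ne i}w_{ij}$), and — crucially — because the update law is written with the $\theta_i^{-1}$ prefactor, one checks that $\theta=(\theta_1,\dots,\theta_m)^T$ remains a left null vector of $W(t)$ for all $t$: indeed $\frac{d}{dt}\sum_i\theta_i w_{ij}$ and the symmetry $\dot w_{ij}=\dot w_{ji}$ of the update make the weighted column sums stay zero. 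This invariance is the whole point of the construction, so I would establish it first as a short lemma. Write $\delta^i=z^i-\bar z$. Then $\sum_i\theta_i\delta^i=0$, and synchronization is equivalent to $\delta^i\to 0$ for all $i$.

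Next I would propose the Lyapunov candidate
\begin{align*}
V(t)=\frac12\sum_{i=1}^m\theta_i\,[\delta^i(t)]^TP\,\delta^i(t)
+\frac{1}{2}\sum_{i=1}^m\sum_{j\in\mathscr N(i)}\frac{\theta_i}{2}\,\big(w_{ij}(t)-w^*\big)^2,
\end{align*}
where $w^*$ is a sufficiently large constant to be chosen at the end (it exists because the error dynamics live in $ker(P\Phi)^\perp$, on which $P\Gamma$ is positive definite by Assumption \ref{a2}, and $(\Theta L)^s$ has a spectral gap $\lambda_2<0$ by the Notation following Proposition \ref{p1}). Differentiating, the first term produces $\sum_i\theta_i[\delta^i]^TP\mathscr F(z^i)$ plus the coupling term $\sum_{i,j}\theta_i w_{ij}[\delta^i]^TP\Gamma z^j$; using the left-eigenvector invariance one rewrites the coupling term, via the standard identity $\sum_{i,j}\theta_i w_{ij}a_i^Tb_j=-\frac12\sum_{i,j}\theta_i w_{ij}(a_i-a_j)^T(b_i-b_j)$ valid when $\theta^TW=0$ and $W\mathbf 1=0$, as $-\frac12\sum_{i,j}\theta_i w_{ij}[\delta^i-\delta^j]^TP\Gamma[\delta^i-\delta^j]$. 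The QUAD inequality (\ref{Q}) applied to each pair $(z^i,\bar z)$ — or rather, after replacing $\mathscr F(\bar z)$ by a common reference and using $\sum_i\theta_i\delta^i=0$ — bounds $\sum_i\theta_i[\delta^i]^TP[\mathscr F(z^i)-\mathscr F(\bar z)]$ by $\sum_i\theta_i[\delta^i]^TP\Phi\delta^i-\eta\sum_i\theta_i\|\delta^i\|^2$. The derivative of the second term of $V$ contributes $\frac12\sum_{i,j}\theta_i(w_{ij}-w^*)\dot w_{ij}=\frac12\sum_{i,j}(w_{ij}-w^*)[\delta^i-\delta^j]^TP\Gamma[\delta^i-\delta^j]$ (using $\dot w_{ij}=\theta_i^{-1}[\,\cdot\,]$), and this is designed to cancel the $w_{ij}$-dependent part of the coupling term, leaving exactly $-\frac{w^*}{2}\sum_{i,j}\theta_i l_{ij}[\delta^i-\delta^j]^TP\Gamma[\delta^i-\delta^j]$ with the original (fixed) weights — note $\dot w_{ij}\ge 0$ so the weights only grow, which is what makes $w^*$ exist.

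Collecting terms, $\dot V\le \sum_i\theta_i[\delta^i]^TP\Phi\delta^i-\eta\sum_i\theta_i\|\delta^i\|^2-\frac{w^*}{2}\sum_{i,j}\theta_i l_{ij}[\delta^i-\delta^j]^TP\Gamma[\delta^i-\delta^j]$. Decompose $\delta^i=u^i+v^i$ with $v^i\in ker(P\Phi)$ and $u^i\in ker(P\Phi)^\perp$; on the kernel component the $P\Phi$ term vanishes and the $-\eta\|\cdot\|^2$ term dominates, while on the complementary component the coupling quadratic form, which by Assumption \ref{a2} and the spectral gap $\lambda_2<0$ is bounded below by a positive multiple of $\sum_i\theta_i\|u^i\|^2$ (here one uses that $P\Gamma$ is positive definite on $ker(P\Phi)^\perp$ and that the graph Laplacian restricted to the synchronization error space is positive definite), can be made to dominate the $P\Phi$ term once $w^*$ is large. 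Hence $\dot V\le -\kappa\sum_i\theta_i\|\delta^i\|^2$ for some $\kappa>0$. Standard arguments (LaSalle, or boundedness of $V$ giving $\delta^i\in L^2$, plus uniform continuity of $\dot\delta^i$) then force $\delta^i(t)\to 0$, i.e.\ synchronization; the weights $w_{ij}(t)$ converge to finite limits since they are monotone and $V$ is bounded.

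The main obstacle I anticipate is the bookkeeping that turns the coupling term into the clean "fixed-weight Laplacian" form: this requires the left-eigenvector invariance to hold exactly along trajectories, and getting the $\theta_i$ and $\theta_i^{-1}$ factors to match up between the state part and the weight-adaptation part of $V$ is the delicate point — it is precisely this matching that the $\theta_i^{-1}$ in the update law (\ref{model1a}) is engineered to produce. A secondary technical point is handling the $ker(P\Phi)$ direction, where coupling gives no help and one must rely entirely on the strict dissipativity constant $\eta$; verifying that $\eta\|\delta^i\|^2$ genuinely controls that subspace uniformly in $i$ is where Assumption \ref{a1} does its work.
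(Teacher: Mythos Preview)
Your overall strategy is exactly the paper's: a Lyapunov function combining a $\theta$-weighted state-error term with a weight-error term, the left-eigenvector invariance (the paper isolates this as Lemma~\ref{le3}), the rewriting identity for the coupling term (Lemma~\ref{le2}), the QUAD bound, and the spectral gap $\lambda_2<0$. Conceptually you have the right proof. But two pieces of bookkeeping in your Lyapunov candidate do not line up, and the cancellation you claim does not actually happen as written.

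\textbf{(i) The $\theta_i$ power.} With your weight term $\tfrac{\theta_i}{4}(w_{ij}-w^*)^2$, differentiating and using $\dot w_{ij}=\theta_i^{-1}[\,\cdot\,]$ gives $\tfrac12(w_{ij}-w^*)[z^i-z^j]^TP\Gamma[z^i-z^j]$, exactly as you compute. But the coupling term from the state part is $-\tfrac12\,\theta_i\, w_{ij}\,[z^i-z^j]^TP\Gamma[z^i-z^j]$. The factors $w_{ij}$ and $\theta_i w_{ij}$ do not cancel. The paper takes the weight term as $\tfrac14(\theta_i w_{ij}-c\theta_i l_{ij})^2=\tfrac{\theta_i^2}{4}(w_{ij}-cl_{ij})^2$; the extra $\theta_i$ is precisely what makes the cancellation work. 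You flagged this matching as ``the delicate point'' --- and it is the point where your candidate is off by one power of $\theta_i$.

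\textbf{(ii) The constant target $w^*$.} Even after fixing (i), a scalar target $w^*$ leaves the residual
\[
-\frac{w^*}{2}\sum_{i}\sum_{j\in\mathscr N(i)}\theta_i\,[z^i-z^j]^TP\Gamma[z^i-z^j],
\]
which involves the $0$--$1$ neighbourhood indicator, not the original weights $l_{ij}$; the expression $-\tfrac{w^*}{2}\sum_{i,j}\theta_i l_{ij}[\,\cdot\,]$ you wrote does not follow. This can be rescued (bound $\mathbb 1_{j\in\mathscr N(i)}\ge l_{ij}/\max_{kl}l_{kl}$ and absorb the constant into $w^*$), but the paper sidesteps the issue entirely by choosing the target $c\,l_{ij}$ instead of a scalar, so that after cancellation the original Laplacian $L$ reappears directly and the spectral gap of $(\Theta L)^s$ applies without further massage.

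One more simplification the paper uses, which streamlines your last paragraph: rather than decomposing each $\delta^i$ along $\ker(P\Phi)$ and its complement, they invoke Lemma~\ref{le1} to get the scalar inequality $u^TP\Phi\,u\le c_1\,u^TP\Gamma\,u$ for all $u$, so the $P\Phi$ term is bounded by $c_1\sum_i\theta_i[\delta^i]^TP\Gamma\delta^i$ in one line and then absorbed by the coupling term once $c>c_1\max_i\theta_i\,|\lambda_2|^{-1}$. Your decomposition is morally the same, but the cross terms $u^i\!\leftrightarrow\! v^i$ through $P\Gamma$ would need explicit handling in your version.
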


Firstly, we give two lemmas.
\begin{lem}\label{le1}
Under the assumption of Theorem \ref{adap1}, any $u\in \mathbb{R}^{n}$ can be decomposed to
$u=u_{1}+u_{2}$, where $u_{1}\in ker(P\Phi)^{\bot}$, $u_{2}\in ker(P\Phi)$, and $u_{1}\perp u_{2}$. Therefore, there is a constant $c_{1}$, such that
\begin{align}\label{c1}
u^{T}(P\Phi)u= u_{1}^{T}(P\Phi)u_{1}\le c_{1} u_{1}^{T}(P\Gamma)u_{1}\le c_{1} u^{T}(P\Gamma)u.
\end{align}
\end{lem}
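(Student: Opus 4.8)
The plan is a three-stage argument: set up the orthogonal splitting of $u$, use it to reduce the quadratic form of $P\Phi$ to its nondegenerate block, and then compare the two nondegenerate quadratic forms by a generalized Rayleigh quotient. For the splitting, note that $ker(P\Phi)$ is a linear subspace of $\mathbb{R}^n$, so the orthogonal projection theorem produces, for every $u$, a unique $u=u_1+u_2$ with $u_1$ the Euclidean projection onto $ker(P\Phi)^{\bot}$, $u_2$ the projection onto $ker(P\Phi)$, and $u_1^{T}u_2=0$. Throughout I read $ker(P\Phi)$ as the kernel of the symmetric matrix $(P\Phi)^{s}=\frac12(P\Phi+\Phi^{T}P)$, which is harmless because only $(P\Phi)^{s}$ enters the form $u^{T}(P\Phi)u$, and it makes $ker(P\Phi)^{\bot}$ an intrinsic feature of that form.

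For the equality $u^{T}(P\Phi)u=u_1^{T}(P\Phi)u_1$, I would expand $u^{T}(P\Phi)u=(u_1+u_2)^{T}(P\Phi)^{s}(u_1+u_2)$; since $(P\Phi)^{s}u_2=0$, the three terms containing $u_2$ all vanish, leaving $u_1^{T}(P\Phi)^{s}u_1=u_1^{T}(P\Phi)u_1$. For the first inequality, $u_1^{T}(P\Phi)u_1\le c_1 u_1^{T}(P\Gamma)u_1$, I would use Assumption \ref{a2}: $v\mapsto v^{T}(P\Gamma)v$ is positive definite on $ker(P\Phi)^{\bot}$, so on the (compact) unit sphere of that subspace it attains a strictly positive minimum $\delta$, while $v\mapsto v^{T}(P\Phi)v$ attains some maximum $M$ there; putting $c_1=\max\{M/\delta,0\}$ and exploiting that both sides are homogeneous of degree two in $u_1$ gives the bound for all $u_1\in ker(P\Phi)^{\bot}$. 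Equivalently, $c_1$ is the largest generalized eigenvalue of the pencil $((P\Phi)^{s},(P\Gamma)^{s})$ restricted to $ker(P\Phi)^{\bot}$, truncated below at $0$.

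The step I expect to be the main obstacle is the last inequality, $c_1 u_1^{T}(P\Gamma)u_1\le c_1 u^{T}(P\Gamma)u$. Since $c_1\ge0$ it reduces to $u^{T}(P\Gamma)u\ge u_1^{T}(P\Gamma)u_1$, and expanding gives $u^{T}(P\Gamma)u=u_1^{T}(P\Gamma)u_1+2u_1^{T}(P\Gamma)^{s}u_2+u_2^{T}(P\Gamma)u_2$; the last term is $\ge0$ by the semi-positive-definiteness of $P\Gamma$, but the cross term $u_1^{T}(P\Gamma)^{s}u_2$ carries no sign a priori, and killing it is precisely where the full content of Assumption \ref{a2} must be read in. What is needed is that the splitting $ker(P\Phi)^{\bot}\oplus ker(P\Phi)$ is also orthogonal for the bilinear form of $P\Gamma$ — equivalently, that $(P\Gamma)^{s}$ leaves $ker(P\Phi)$ invariant, so that $(P\Gamma)^{s}u_2\in ker(P\Phi)$ and $u_1^{T}(P\Gamma)^{s}u_2=0$. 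This holds in the situations the paper has in mind — e.g.\ when $P\Gamma$ annihilates $ker(P\Phi)$, or when the symmetric parts of $P\Phi$ and $P\Gamma$ are simultaneously diagonalizable, which covers the diagonal case of the original QUAD definition — and I would understand Assumption \ref{a2} in that sense; with the cross term gone, $u^{T}(P\Gamma)u=u_1^{T}(P\Gamma)u_1+u_2^{T}(P\Gamma)u_2\ge u_1^{T}(P\Gamma)u_1$, which closes the chain \eqref{c1}.
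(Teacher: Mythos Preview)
The paper does not prove Lemma~\ref{le1} at all; it is stated without proof and then used in the proof of Theorem~\ref{adap1}. So your proposal is not competing against any argument in the paper --- it is supplying one.

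Your first two steps are sound: the orthogonal decomposition is routine, and your Rayleigh-quotient argument for $u_1^{T}(P\Phi)u_1\le c_1 u_1^{T}(P\Gamma)u_1$ on $ker(P\Phi)^{\bot}$ is exactly what Assumption~\ref{a2} is designed for. Your reinterpretation of $ker(P\Phi)$ as the kernel of the symmetric part $(P\Phi)^{s}$ is also the right move, since without it even the equality $u^{T}(P\Phi)u=u_1^{T}(P\Phi)u_1$ would fail (the term $u_2^{T}(P\Phi)u_1$ survives when $P\Phi$ is not symmetric).

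The issue you flag in the last inequality is real and is a genuine gap in the paper, not just in your proof. Under Assumptions~\ref{a1} and~\ref{a2} as literally written, the cross term $2u_1^{T}(P\Gamma)^{s}u_2$ carries no sign, and one can construct $P\Gamma$ positive semidefinite on $\mathbb{R}^n$, positive definite on $ker(P\Phi)^{\bot}$, yet with $u^{T}(P\Gamma)u<u_1^{T}(P\Gamma)u_1$ for some $u$. Your rescue --- requiring that $(P\Gamma)^{s}$ leave $ker(P\Phi)$ invariant, or equivalently that the two symmetric parts be simultaneously block-diagonal with respect to the splitting --- is the correct additional hypothesis, and it does hold in the diagonal setting of the original QUAD condition that the paper's Remark after Assumption~\ref{a1} recalls. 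So your proof is correct once Assumption~\ref{a2} is strengthened in this way; the paper simply did not notice the need.
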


\begin{lem}\label{le2}(see \cite{Chen2} equation (19))
If $A=(a_{ij})$ is a symmetric matrix satisfying $a_{ij}\ge 0$, if $i\ne j$
and $\sum_{i=1}^{m}a_{ij}=0$. Then, we have
\begin{align}\label{Basic}
\sum_{i,j=1}^{m}a_{ij}u^{{i}^{T}}v^{j} =-\frac{1}{2}\sum_{i,j=1}^ma_{ij}[u^{i}-u^{j}]^{T}[v^{i}-v^{j}].
\end{align}
\end{lem}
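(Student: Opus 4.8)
The plan is to prove the identity by brute-force expansion of the right-hand side, exploiting only the symmetry $a_{ij}=a_{ji}$ and the zero-row-sum condition (which, by symmetry, is also a zero-column-sum condition: $\sum_{j=1}^m a_{ij}=0$ for every $i$). First I would expand the bilinear form inside the sum on the right:
\begin{align*}
-\frac{1}{2}\sum_{i,j=1}^{m}a_{ij}[u^{i}-u^{j}]^{T}[v^{i}-v^{j}]
=-\frac{1}{2}\sum_{i,j=1}^{m}a_{ij}\Big(u^{iT}v^{i}-u^{iT}v^{j}-u^{jT}v^{i}+u^{jT}v^{j}\Big).
\end{align*}
I would then treat the four resulting sums separately.

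For the two ``diagonal'' sums, I would factor out the term that does not depend on the summation index carrying the row/column sum: $\sum_{i,j}a_{ij}u^{iT}v^{i}=\sum_{i}u^{iT}v^{i}\big(\sum_{j}a_{ij}\big)=0$ and likewise $\sum_{i,j}a_{ij}u^{jT}v^{j}=\sum_{j}u^{jT}v^{j}\big(\sum_{i}a_{ij}\big)=0$, using the zero-sum hypothesis in each direction. For the two ``cross'' sums, I would use symmetry together with a relabeling of indices: $\sum_{i,j}a_{ij}u^{jT}v^{i}=\sum_{i,j}a_{ji}u^{jT}v^{i}=\sum_{i,j}a_{ij}u^{iT}v^{j}$, so the two cross terms are equal. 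Substituting back gives $-\frac{1}{2}\big(0-\sum_{i,j}a_{ij}u^{iT}v^{j}-\sum_{i,j}a_{ij}u^{iT}v^{j}+0\big)=\sum_{i,j}a_{ij}u^{iT}v^{j}$, which is exactly the left-hand side.

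There is essentially no hard part here: the statement is a discrete ``integration-by-parts'' identity for the (generalized) graph Laplacian quadratic form, and the entire proof is index bookkeeping. The only points requiring a little care are (i) invoking the vanishing sum once along rows and once along columns, which is legitimate precisely because $A$ is symmetric, and (ii) the relabeling $i\leftrightarrow j$ in the cross term, where symmetry of $A$ is again what makes $a_{ij}$ and $a_{ji}$ interchangeable. Since the identity is purely algebraic, it holds verbatim for vector-valued $u^i,v^i\in\mathbb{R}^n$ (the transpose products $u^{iT}v^j$ being scalars), with no positivity or connectivity assumptions needed beyond those stated.
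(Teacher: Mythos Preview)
Your proof is correct: the expansion, the vanishing of the two diagonal sums via the zero row/column sums, and the identification of the two cross sums via symmetry are exactly the right moves, and your remark that the nonnegativity hypothesis $a_{ij}\ge 0$ for $i\ne j$ is not actually used is also accurate. Note that the paper itself does not prove this lemma at all; it simply quotes the identity from \cite{Chen2} (equation (19) there), so there is no in-paper argument to compare against --- your direct computation is the standard and essentially only way to verify it.
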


Lemma 2 was given in \cite{Chen2} to discuss synchronization of nonlinearly coupled networks. It plays a key role in discussing synchronization of the distributive adaptive algorithm.

\begin{lem}\label{le3}
Under algorithm (\ref{model1a}), for all $t>0$,
\begin{align*}
\sum_{i=1}^{m}\theta_{i}w_{ij}(t)=0, \sum_{i=1}^{m}\theta_{i}\dot{w}_{ij}(t)=0,
~\forall j; \sum_{j=1}^{m}w_{ij}(t)=0,\sum_{j=1}^{m}\dot{w}_{ij}(t)=0,~\forall i.
\end{align*}
\end{lem}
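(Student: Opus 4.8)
The plan is to obtain the two row identities immediately from how the diagonal weights are defined, and to get the two (weighted) column identities from a symmetry hidden in the adaptation law, followed by a single integration. For the rows: since $w_{ii}(t)$ is defined by $w_{ii}(t)=-\sum_{j\neq i}w_{ij}(t)$, we have $\sum_{j=1}^{m}w_{ij}(t)=w_{ii}(t)+\sum_{j\neq i}w_{ij}(t)=0$ for every $i$ and every $t$ in the interval on which the solution of (\ref{model1a}) exists; differentiating this identity in $t$ then gives $\sum_{j=1}^{m}\dot{w}_{ij}(t)=0$ as well. (This is consistent with the stipulated $\dot{w}_{ii}(t)=-\sum_{j\in\mathscr{N}(i)}\dot{w}_{ij}(t)$, using that $\dot{w}_{ij}\equiv0$ whenever $j\notin\mathscr{N}(i)\cup\{i\}$, since then $w_{ij}(0)=l_{ij}=0$ and no update acts on $w_{ij}$.)

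For the column identities, the key remark is that for $j\in\mathscr{N}(i)$ the adaptation law is the quadratic form $\theta_i\dot{w}_{ij}(t)=[z^i(t)-z^j(t)]^{T}P\Gamma[z^i(t)-z^j(t)]$, which is unchanged when $i$ and $j$ are interchanged, because replacing $z^i-z^j$ by $z^j-z^i$ does not alter a quadratic form. Using the symmetry of the neighbourhood relation, $j\in\mathscr{N}(i)\Leftrightarrow i\in\mathscr{N}(j)$, this gives $\theta_i\dot{w}_{ij}(t)=\theta_j\dot{w}_{ji}(t)$ for all $i\neq j$. Now fix $j$, isolate the diagonal term, and note that only $i\in\mathscr{N}(j)$ contribute off the diagonal, so $\sum_{i=1}^{m}\theta_i\dot{w}_{ij}(t)=\theta_j\dot{w}_{jj}(t)+\sum_{i\in\mathscr{N}(j)}\theta_i\dot{w}_{ij}(t)$. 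Substituting $\theta_j\dot{w}_{jj}(t)=-\sum_{k\in\mathscr{N}(j)}\theta_j\dot{w}_{jk}(t)=-\sum_{k\in\mathscr{N}(j)}[z^j(t)-z^k(t)]^{T}P\Gamma[z^j(t)-z^k(t)]$ and matching this, term by term via the quadratic-form symmetry, against $\sum_{i\in\mathscr{N}(j)}[z^i(t)-z^j(t)]^{T}P\Gamma[z^i(t)-z^j(t)]$, the two sums cancel, whence $\sum_{i=1}^{m}\theta_i\dot{w}_{ij}(t)=0$.

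Integrating this last equality from $0$ to $t$ and using $w_{ij}(0)=l_{ij}$ yields $\sum_{i=1}^{m}\theta_i w_{ij}(t)=\sum_{i=1}^{m}\theta_i l_{ij}=(\theta^{T}L)_j=0$, the final equality being exactly Proposition \ref{p1}, namely $\theta^{T}L=0$ since $\theta$ is the left eigenvector of $L$ for the eigenvalue $0$. The only step that deserves care — and the only hypothesis used beyond the definitions in (\ref{model1a}) — is the symmetry of the neighbourhood structure $\mathscr{N}(\cdot)$, which makes the diagonal contribution pair off exactly with the off-diagonal ones; granting it, the rest is bookkeeping plus one integration, so I do not expect a serious obstacle.
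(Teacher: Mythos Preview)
The paper states this lemma without proof, so there is no argument to compare against in detail. Your proof is exactly what the authors must have in mind: the row identities follow at once from the definition $w_{ii}(t)=-\sum_{j\ne i}w_{ij}(t)$, and the column identities hinge on the symmetry $\theta_i\dot{w}_{ij}(t)=\theta_j\dot{w}_{ji}(t)$ (the right-hand side of the update law is a quadratic form in $z^i-z^j$, hence invariant under $i\leftrightarrow j$), after which one integrates and invokes $\theta^{T}L=0$ from Proposition~\ref{p1}. Your flag on the symmetry of the neighbourhood relation is apt and in fact necessary: if $j\in\mathscr{N}(i)$ does not force $i\in\mathscr{N}(j)$ (take a directed $3$-cycle with unit weights, so $\theta_i=\tfrac{1}{3}$), then $\sum_{i}\theta_i\dot{w}_{ij}(t)$ need not vanish, so the lemma genuinely requires this tacit hypothesis even though the paper does not spell it out.
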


\emph{Proof of Theorem \ref{adap1}:} Define
\begin{align*}
V(t)=\frac{1}{2}\sum_{i=1}^{m}\theta_{i}(z^{i}(t)-\bar{z}(t))^{T}
P(z^{i}(t)-\bar{z}(t))+\frac{1}{4}\sum_{i\ne j}(\theta_{i}w_{ij}(t)-c\theta_{i}l_{ij})^{2},
\end{align*}
where $\bar{z}(t)=\sum\limits_{i=1}^{m}\theta_{i}z_{i}(t)$.

Differentiating it, one can get
\begin{align*}
\dot{V}(t)
=&\sum_{i=1}^{m}\theta_{i}[z^{i}(t)-\bar{z}(t)]^{T}P[\mathscr{F}(z^{i}(t))-\mathscr{F}(\bar{z}(t))]\nonumber\\
&+\sum_{i=1}^{m}[z^{i}(t)-\bar{z}(t)]^{T}
\sum_{j=1}^{m}\theta_{i}w_{ij}(t)P\Gamma[z^{j}(t)-\bar{z}(t)]
\nonumber\\
&-\frac{1}{2}\sum_{i\ne j}
[c\theta_{i}l_{ij}-\theta_{i}w_{ij}(t)][z^{i}(t)-z^{j}(t)]^{T}P\Gamma[z^{i}(t)-z^{j}(t)].
\end{align*}
By QUAD-condition and Lemma \ref{le1}, we have
\begin{align*}%\label{Q}
&\sum_{i=1}^{m}\theta_{i}[z^{i}(t)-\bar{z}(t)]^{T}
P[\mathscr{F}(z^{i}(t))-\mathscr{F}(\bar{z}(t))]
\nonumber\\
\le & \sum_{i=1}^{m}\theta_{i}[z^{i}(t)-\bar{z}(t)]^{T}P\Phi[z^{i}(t)-\bar{z}(t)]-\eta \sum_{i=1}^{m}[z^{i}(t)-\bar{z}(t)]^{T}[z^{i}(t)-\bar{z}(t)]\nonumber\\
\le& c_{1}\sum_{i=1}^{m}\theta_{i}[z^{i}(t)-\bar{z}(t)]^{T}
P\Gamma[z^{i}(t)-\bar{z}(t)]-\eta \sum_{i=1}^{m}[z^{i}(t)-\bar{z}(t)]^{T}[z^{i}(t)-\bar{z}(t)].
\end{align*}

Based on Lemma \ref{le2} and Lemma \ref{le3}, we have
\begin{align*}
&2\sum_{i=1}^{m}[z^{i}(t)-\bar{z}(t)]^{T}
\sum_{j=1}^{m}\theta_{i}l_{ij}P\Gamma[z^{j}(t)-\bar{z}(t)]\\
=&\sum_{i,j=1}^{m}[z^{i}(t)-\bar{z}(t)]^{T}
(\theta_{i}l_{ij}+\theta_{j}l_{ji})P\Gamma[z^{j}(t)-\bar{z}(t)]\\
=&-\frac{1}{2}\sum_{i,j=1}^m[z^{i}(t)-z^{j}(t)]^{T}
(\theta_{i}l_{ij}+\theta_{j}l_{ji})P\Gamma[z^{i}(t)-z^{j}(t)]\\
=&-\sum_{i,j=1}^m[z^{i}(t)-z^{j}(t)]^{T}
\theta_{i}l_{ij}P\Gamma[z^{i}(t)-z^{j}(t)],
\end{align*}
and
\begin{align*}
&\sum_{i=1}^{m}[z^{i}(t)-\bar{z}(t)]^{T}
\sum_{j=1}^{m}\theta_{i}w_{ij}(t)P\Gamma[z^{j}(t)-\bar{z}(t)]\\
=&-\frac{1}{2}\sum_{i,j=1}^m[z^{i}(t)-z^{j}(t)]^{T}
\theta_{i}w_{ij}(t)P\Gamma[z^{i}(t)-z^{j}(t)].
\end{align*}
Therefore, we have
\begin{align*}
\dot{V}(t)
\le& c_{1}\sum_{i=1}^{m}\theta_{i}[z^{i}(t)-\bar{z}(t)]^{T}P
\Gamma[z^{i}(t)-\bar{z}(t)]\\
&-\eta \sum_{i=1}^{m}[z^{i}(t)-\bar{z}(t)]^{T}[z^{i}(t)-\bar{z}(t)]\\
&+\frac{c}{2}\sum_{i,l=1}^{m}[z^{i}(t)-\bar{z}(t)]^{T}
(\theta_{i}l_{ij}+\theta_{j}l_{ji})P\Gamma[z^{i}(t)-\bar{z}(t)].
\end{align*}

Since
\begin{align*}
&\frac{c}{2}\sum_{i=1}^{m}(z^{i}(t)-\bar{z}(t))^{T}
\sum_{j=1}^{m}(\theta_{i}l_{ij}+\theta_{j}l_{ji})P\Gamma(z^{j}(t)-\bar{z}(t))\\
\le& \frac{c\lambda_{2}}{\max_i\theta_i}\sum_{i,j=1}^{m}\theta_i(z^{i}(t)-\bar{z}(t))^{T}P\Gamma
(z^{j}(t)-\bar{z}(t)).
\end{align*}

In case that $c>c_{1}\max_i\theta_i|\lambda_{2}|^{-1}$, we have
%\begin{align*}
%&c_{1}\sum_{i=1}^{m}\theta_{i}[z^{i}(t)-\bar{z}(t)]^{T}
%P\Gamma[z^{i}(t)-\bar{z}(t)]\\
%&+\frac{c}{2}\sum_{i=1}^{m}[z^{i}(t)-\bar{z}(t)]^{T}
%\sum_{j=1}^{m}(\theta_{i}l_{ij}+\theta_jl_{ji})P\Gamma[z^{j}(t)-\bar{z}(t)]<0.
%\end{align*}
%
%In summary, we have proved
\begin{align}\label{zj}
\dot{V}(t)\le -\eta\sum_{i=1}^{m}\theta_{i}[z^{i}(t)-\bar{z}(t)]^{T}[z^{i}(t)-\bar{z}(t)],
\end{align}
and
\begin{align*}
\int_{0}^{t}
\sum_{i=1}^{m}\theta_{i}[z^{i}(\kappa)-\bar{z}(\kappa)]^{T}[z^{i}(\kappa)-\bar{z}(\kappa)]d\kappa<\frac{1}{\eta}[V(0)-V(t)].
\end{align*}
Therefore, $\lim_{t\rightarrow \infty}\sum_{i=1}^{m}\theta_{i}[z^{i}(t)-\bar{z}(t)]^{T}[z^{i}(t)-\bar{z}(t)]=0$.
Theorem is proved completely.

\begin{rem}
It has been explored that the left and right eigenvectors corresponding to the eigenvalue $0$ of the coupling matrix plays key roles in discussing synchronization. Therefore, to keep the left and right eigenvectors corresponding to the eigenvalue $0$ of all coupling matrices $W(t)=[w_{ij}(t)]$, $t>0$, invariant is the most important. Our algorithm  ensures that all coupling matrices $W(t)=[w_{ij}(t)]$ have same left and right eigenvectors corresponding to the eigenvalue $0$.
\end{rem}

\begin{rem}
It is clear that QUAD condition and the equality (5) in Lemma 2 (see \cite{Chen2} equation (19)) play key role in proof of Theorem.
\end{rem}

\begin{col}
Assume $L$ is symmetric or node balanced. Thus all $\theta_{i}=\frac{1}{m}$, $i=1,\cdots,m$. Under the assumptions made in Theorem \ref{adap1}, the following algorithm
\begin{align*}%\label{model1a}
\left\{\begin{array}{ll}
\dot{x}^{i}(t)=\mathscr{F}(z^{i}(t))+\sum_{j=1}^{m}w_{ij}(t)\Gamma z^{j}(t),\\
\dot{w}_{ij}(t)=
[z^{i}(t)-z^{j}(t)]^{T}P\Gamma[z^{i}(t)-z^{j}(t)],~j\in N(i),\\
w_{ij}(0)=l_{ij},~i,j=1,\cdots,m
\end{array}\right.
\end{align*}
can reach synchronization.
\end{col}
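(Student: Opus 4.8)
\emph{Proof proposal (Corollary).}
The plan is to rerun the Lyapunov argument of Theorem \ref{adap1}, noting that the hypothesis on $L$ is precisely what makes the normalizing weights trivial and the simplified update law admissible. First I would observe that, $L$ being connected and symmetric (or node balanced), $[1,\dots,1]^{T}$ is a left null vector of $L$ as well as a right one, so Proposition \ref{p1} forces $\theta=\frac1m[1,\dots,1]^{T}$, i.e.\ $\Theta=\frac1m I$ and $(\Theta L)^{s}=\frac{1}{2m}(L+L^{T})$ still has a simple zero eigenvalue strictly above $\lambda_2\ge\cdots\ge\lambda_m$. In particular $\theta_i^{-1}=m$ is a constant, so the Corollary's update law is the one in (\ref{model1a}) with this constant factor removed; since the state-dependent factor $[z^{i}(t)-z^{j}(t)]^{T}P\Gamma[z^{i}(t)-z^{j}(t)]$ of $\dot w_{ij}$ is untouched, Lemma \ref{le3} still holds for the Corollary's algorithm (the ``$\mathrm{sum}=0$'' identities are insensitive to the removed constant), and when $L=L^{T}$ one even has $W(t)=W(t)^{T}$ for all $t$, since $w_{ij}(0)=l_{ij}=l_{ji}=w_{ji}(0)$ and $\dot w_{ij}(t)=\dot w_{ji}(t)$.

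Next I would take the rescaled Lyapunov function
\begin{align*}
V(t)=\frac{1}{2m}\sum_{i=1}^{m}[z^{i}(t)-\bar{z}(t)]^{T}P[z^{i}(t)-\bar{z}(t)]+\frac{1}{4m}\sum_{i\ne j}\big(w_{ij}(t)-cl_{ij}\big)^{2},\qquad \bar{z}(t)=\frac1m\sum_{i=1}^{m}z^{i}(t),
\end{align*}
and differentiate it along the Corollary's algorithm, following the proof of Theorem \ref{adap1} step by step. The QUAD condition together with Lemma \ref{le1} bounds the $\mathscr{F}$-part of $\dot V$ by $\frac{c_1}{m}\sum_{i}[z^{i}(t)-\bar{z}(t)]^{T}P\Gamma[z^{i}(t)-\bar{z}(t)]-\frac{\eta}{m}\sum_{i}[z^{i}(t)-\bar{z}(t)]^{T}[z^{i}(t)-\bar{z}(t)]$; Lemma \ref{le2} and Lemma \ref{le3} turn the bilinear coupling term into $-\frac{1}{2m}\sum_{i,j}w_{ij}(t)[z^{i}(t)-z^{j}(t)]^{T}P\Gamma[z^{i}(t)-z^{j}(t)]$; and the derivative of the weight term equals $\frac{1}{2m}\sum_{i\ne j}\big(w_{ij}(t)-cl_{ij}\big)[z^{i}(t)-z^{j}(t)]^{T}P\Gamma[z^{i}(t)-z^{j}(t)]$. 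These cancel the $W(t)$-contribution and leave the residual $-\frac{c}{2m}\sum_{i,j}l_{ij}[z^{i}(t)-z^{j}(t)]^{T}P\Gamma[z^{i}(t)-z^{j}(t)]$.

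Finally I would close the argument as in Theorem \ref{adap1}: rewriting that residual by Lemma \ref{le2}, using that the nonzero eigenvalues of $(\Theta L)^{s}=\frac{1}{2m}(L+L^{T})$ are at most $\lambda_2<0$ on the orthogonal complement of $[1,\dots,1]^{T}$ while $P\Gamma$ is positive definite on $ker(P\Phi)^{\bot}$, one finds that for $c$ large enough (the threshold being $c_1\max_i\theta_i|\lambda_2|^{-1}$, now with $\max_i\theta_i=\frac1m$) every $P\Gamma$-weighted term is nonpositive, so that
\begin{align*}
\dot V(t)\le-\frac{\eta}{m}\sum_{i=1}^{m}[z^{i}(t)-\bar{z}(t)]^{T}[z^{i}(t)-\bar{z}(t)]\le 0 .
\end{align*}
Integrating and using $V\ge0$ gives $\int_{0}^{\infty}\sum_{i}[z^{i}(t)-\bar{z}(t)]^{T}[z^{i}(t)-\bar{z}(t)]\,dt<\infty$, whence $\sum_{i}[z^{i}(t)-\bar{z}(t)]^{T}[z^{i}(t)-\bar{z}(t)]\to0$, i.e.\ synchronization. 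I do not expect a genuine obstacle, this being a specialization of Theorem \ref{adap1}; the one thing to check carefully is the node-balanced (non-symmetric) case, namely that Lemma \ref{le3} and the $\lambda_2$-estimate still go through there, which amounts to replacing $l_{ij}$ and $w_{ij}$ by their symmetrizations $\frac12(l_{ij}+l_{ji})$, $\frac12(w_{ij}+w_{ji})$ inside the quadratic forms, exactly as in the proof of Theorem \ref{adap1}.
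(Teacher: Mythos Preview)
Your proposal is correct and matches the paper's intent: the Corollary is stated there without proof, as an immediate specialization of Theorem~\ref{adap1} once $\theta_i=\tfrac1m$, and you carry this out explicitly by rescaling the weight term in $V$ (using $\tfrac{1}{4m}$ in place of $\tfrac{1}{4}\theta_i^2$) to absorb the missing constant $\theta_i^{-1}=m$ in the update law. Your caveat about the node-balanced case and the need to symmetrize inside the quadratic forms is also exactly the mechanism the paper uses in the proof of Theorem~\ref{adap1}.
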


\section{Adaptive pinning synchronization of complex networks with a single controller}

We discuss adaptive synchronization with a single pinning controller, which is identical to synchronization of leader-follower systems, see \cite{Chen2}.

Consider the following pinning control model \cite{Chen2},
\begin{align}\label{pin}
\left\{\begin{array}{l}
\dot{z}^1(t)=\mathscr{F}(z^1(t))+c\sum\limits_{j=1}^m l_{1j}\Gamma z^j(t)
-c\varepsilon\Gamma[z^{1}(t)-s(t)],\\
\dot{z}^i(t)=\mathscr{F}(z^i(t))+c\sum\limits_{j=1}^ml_{ij}\Gamma z^j(t),
i=2,\cdots,m\end{array}\right.
\end{align}
where $s(t)$ be a solution of $\dot{s}(t)=\mathscr{F}(s(t))$. Its adaptive algorithm
\begin{align}\label{adappin1}
\left\{
\begin{array}{ll}
\dot{z}^1(t)=&\mathscr{F}(z^1(t))
+\sum\limits_{j=1}^m w_{1j}(t)\Gamma z^j(t)-w_{10}(t)\Gamma[z^{1}(t)-s(t)],\\
\dot{z}^i(t)=&\mathscr{F}(z^i(t))+\sum\limits_{j=1}^m w_{ij}(t)\Gamma z^j(t),~i=2,\cdots,m\\
\dot{w}_{ij}(t)=&\theta_{i}^{-1}
[z^{i}(t)-z^j(t)]^{T}P\Gamma[z^{i}(t)-z^j(t)],~~j\in N(i),
\\
\dot{w}_{10}(t)=&\theta_{1}^{-1}[z^{1}(t)-s(t))]^{T}P\Gamma[z^{1}(t)-s(t)],\\
w_{ij}(0)=&l_{ij},~i,j=1,\cdots,m, w_{10}(0)=\varepsilon
\end{array}
\right.
\end{align}
Similarly, we have
\begin{thm}\label{adappint}
Under Assumptions \ref{a1} and \ref{a2}, distributive adaptive algorithm (\ref{adappin1}) can synchronize to $s(t)$.
\end{thm}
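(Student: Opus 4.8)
\emph{Proof proposal for Theorem \ref{adappint}.} The plan is to mimic the proof of Theorem~\ref{adap1}, treating the reference trajectory $s(t)$ as a virtual ``node~$0$'' and absorbing the single pinning term into an augmented coupling matrix. Set $e^i(t)=z^i(t)-s(t)$ for $i=1,\dots,m$ and $e^0(t)\equiv 0$. Using $\dot s=\mathscr F(s)$, the invariance $\sum_j w_{ij}(t)=0$ (preserved by the updates exactly as in Lemma~\ref{le3}), and $z^i-z^j=e^i-e^j$, $z^1-s=e^1-e^0$, one rewrites (\ref{adappin1}) as $\dot e^1=\mathscr F(z^1)-\mathscr F(s)+\sum_{j=1}^m w_{1j}\Gamma e^j-w_{10}\Gamma e^1$, and $\dot e^i=\mathscr F(z^i)-\mathscr F(s)+\sum_{j=1}^m w_{ij}\Gamma e^j$ for $i\ge 2$, together with $\theta_i\dot w_{ij}=(e^i-e^j)^TP\Gamma(e^i-e^j)$ and $\theta_1\dot w_{10}=(e^1-e^0)^TP\Gamma(e^1-e^0)$. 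The invariances of Lemma~\ref{le3} (weighted row and column sums of $(w_{ij}(t))$ vanishing) persist, which is what will permit Lemma~\ref{le2} to be used in the augmented setting below.

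Take $V(t)=\frac12\sum_{i=1}^m\theta_i(e^i)^TPe^i+\frac14\sum_{i\ne j}(\theta_iw_{ij}-c\theta_il_{ij})^2+\frac12(\theta_1w_{10}-c\theta_1\varepsilon)^2$ and differentiate. The $\mathscr F$-terms are bounded by $c_1\sum_i\theta_i(e^i)^TP\Gamma e^i-\eta\sum_i(e^i)^Te^i$ via Assumption~\ref{a1} (applied to $z_1=z^i$, $z_2=s$) and Lemma~\ref{le1}, verbatim as in Theorem~\ref{adap1}. Handling the coupling term $\sum_{i,j=1}^m\theta_iw_{ij}(e^i)^TP\Gamma e^j$ exactly as in Theorem~\ref{adap1} (Lemmas~\ref{le2} and~\ref{le3}) and combining with the derivatives of the two weight-mismatch terms, every time-varying weight cancels: $w_{ij}(t)$ is replaced by the constant $cl_{ij}$, and $w_{10}(t)$ by $c\varepsilon$, the latter because $-\theta_1w_{10}(e^1)^TP\Gamma e^1+(\theta_1w_{10}-c\theta_1\varepsilon)(e^1)^TP\Gamma e^1=-c\theta_1\varepsilon(e^1)^TP\Gamma e^1$. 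This leaves $\dot V\le c_1\sum_i\theta_i(e^i)^TP\Gamma e^i-\eta\sum_i(e^i)^Te^i-\frac c2\sum_{i\ne j}\theta_il_{ij}(e^i-e^j)^TP\Gamma(e^i-e^j)-c\theta_1\varepsilon(e^1)^TP\Gamma e^1$.

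The genuinely new step — and the one I expect to be the main obstacle — is the last two terms. Since $e^0\equiv 0$, I would combine them into $-\frac c2\sum_{i,j=0}^m\hat b_{ij}(e^i-e^j)^TP\Gamma(e^i-e^j)$, where $\hat B=(\hat b_{ij})_{i,j=0}^m$ is the symmetric Laplacian of the weighted graph whose internal edge weights are those of $(\Theta L)^s$ and which has an extra edge $\{0,1\}$ of weight $\theta_1\varepsilon$; $\hat B$ has nonnegative off-diagonal entries and zero row (and column) sums, so Lemma~\ref{le2} turns this expression into $c\sum_{i,j=1}^m\hat b_{ij}(e^i)^TP\Gamma e^j$, whose matrix $(\hat b_{ij})_{i,j=1}^m$ is $(\Theta L)^s$ with its $(1,1)$-entry lowered by $\theta_1\varepsilon$. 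Because $L$ is connected, $\ker((\Theta L)^s)=\mathrm{span}\{[1,\dots,1]^T\}$ and $\lambda_2<0$; lowering the $(1,1)$-entry by $\theta_1\varepsilon>0$ therefore makes this ``grounded'' matrix negative definite — and this is precisely why, unlike in Theorem~\ref{adap1}, no identity of the form $\sum_i\theta_ie^i=0$ is available or needed (indeed $\bar z(t)\ne s(t)$ in general). Writing the symmetric part of $P\Gamma$ as $R^TR$, the quadratic form decouples coordinatewise and the negative definiteness transfers, giving a constant $\mu>0$ with $c\sum_{i,j}\hat b_{ij}(e^i)^TP\Gamma e^j\le-c\mu\sum_i\theta_i(e^i)^TP\Gamma e^i$. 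Choosing $c>c_1/\mu$ then yields $\dot V\le-\eta\sum_i(e^i)^Te^i\le-\eta\sum_i\theta_i(e^i)^Te^i$, so $\int_0^{\infty}\sum_i\theta_i(e^i(t))^Te^i(t)\,dt<\infty$ and hence $z^i(t)\to s(t)$. The two points requiring care are the exact bookkeeping that identifies the grounded Laplacian and the choice of $\mu$ (the smallest eigenvalue of the grounded Laplacian relative to $\Theta$); the QUAD estimate still rests on Lemma~\ref{le1} and Assumption~\ref{a2} in the same way as in Theorem~\ref{adap1}.
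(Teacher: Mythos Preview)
Your proposal is correct and follows essentially the same route as the paper: the same Lyapunov function, the same cancellation of the adaptive weights against the mismatch terms via Lemmas~\ref{le2}--\ref{le3}, and the same reduction to a grounded matrix $(\Theta L)^s$ with its $(1,1)$-entry lowered by $\theta_1\varepsilon$ (the paper writes this as $(\Theta\tilde L)^s$ with $\tilde l_{11}=l_{11}-\varepsilon$ and cites \cite{Chen2} for its negative definiteness). Your virtual-node bookkeeping and explicit $R^TR$ decoupling are cosmetic elaborations of steps the paper leaves implicit, not a different argument.
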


\begin{proof}
Define the following candidate Lyapunov function
\begin{align*}
V(t)=&\frac{1}{2}\sum_{i=1}^{m}\theta_{i}[z^{i}(t)-s(t)]^{T}
P[z^{i}(t)-s(t)]\\
&+\frac{1}{4}\sum_{i\ne j}^{m}[\theta_{i}w_{ij}(t)-c\theta_{i}l_{ij}]^{2}
+\frac{1}{2}[\theta_{1}w_{10}(t)-c\theta_{1}\varepsilon]^{2},
\end{align*}
where $c$ will be given later.

Differentiating $V(t)$, one can get
\begin{align*}
\dot{V}(t)=&\sum_{i=1}^{m}\theta_{i}[z^{i}(t)-s(t)]^{T}P[\mathscr{F}(z^{i}(t))-\mathscr{F}(s(t))]\\
&+\sum_{i=1}^{m}[z^{i}(t)-s(t)]^{T}
\sum_{j=1}^{m}\theta_{i}w_{ij}(t)P\Gamma[z^{j}(t)-s(t)]\\
&-\theta_{1}w_{10}(t)[z^{1}(t)-s(t)]^{T}P\Gamma[z^{1}(t)-s(t)]\\
&-\frac{1}{2}\sum_{i,j=1}^{m}[c\theta_{i}l_{ij}-\theta_{i}w_{ij}(t)][z^{i}(t)-z^{j}(t)]^{T}P\Gamma[z^{i}(t)-z^{j}(t)]
\\&+\theta_{1}[w_{10}(t)-c\varepsilon][z^{1}(t)-s(t)]^{T}P\Gamma[z^{1}(t)-s(t)]\\
=&\sum_{i=1}^{m}\theta_{i}[z^{i}(t)-s(t)]^{T}P[\mathscr{F}(z^{i}(t))-\mathscr{F}(s(t))]\\
&-c\theta_{1}\varepsilon[z^{1}(t)-s(t)]^{T}P\Gamma[z^{1}(t)-s(t)]\\
&-\frac{c}{2}\sum_{i,j=1}^{m}\theta_{i}l_{ij}[z^{i}(t)-z^{j}(t)]^{T}P\Gamma[z^{i}(t)-z^{j}(t)].
\end{align*}
By Lemma \ref{le2}, we have
\begin{align*}
&-\frac{c}{2}\sum_{i,j=1}^{m}\theta_{i}l_{ij}[z^{i}(t)-z^{j}(t)]^{T}
P\Gamma[z^{i}(t)-z^{j}(t)]\\
=&c\sum_{i,j=1}^{m}\theta_{i}l_{ij}[z^{i}(t)-s(t)]^{T}P\Gamma[z^{i}(t)-s(t)].
\end{align*}
By similar derivations given in Theorem \ref{adap1}, we have
\begin{align*}
\dot{V}(t)
\le& -\eta\sum_{i=1}^{m}\theta_{i}[z^{i}(t)-s(t)]^{T}[z^{i}(t)-s(t)]\\
&+c_{1}\sum_{i=1}^{m}\theta_{i}[z^{i}(t)-s(t)]^{T}P\Gamma[z^{i}(t)-s(t)]
\nonumber\\
&+c\sum_{i=1}^{m}[z^{i}(t)-s(t)]^{T}
\sum_{j=1}^{m}\theta_{i}\tilde{l}_{ij}P\Gamma[z^{j}(t)-s(t)],
\end{align*}
where $c_{1}$ is the constant given in (\ref{c1}) in Lemma \ref{le1}, $\tilde{l}_{ij}=l_{ij}$, except $\bar{l}_{11}=l_{11}-\epsilon$. Define a matrix $\tilde{L}=(\tilde{l}_{ij})_{i,j=1}^{m}$. \cite{Chen2} pointed out that all eigenvalues $\mu_{i}$, $i=1,\cdots,m$ of $(\Theta \tilde{L})^{s}=\frac{1}{2}[\Theta \tilde{L}+\tilde{L}^{T}\Theta ]$ satisfy $0>\mu_{1}\ge \mu_{2}\ge\cdots\ge \mu_{m}$. If $c>c_{1}\max_i\theta_{i}|\mu_{1}|^{-1}$, we have
\begin{align*}
&c_{1}\sum_{i=1}^{m}\theta_{i}[z^{i}(t)-s(t)]^{T}P\Gamma[z^{i}(t)-s(t)]\\
+&cd_{1}\sum_{i=1}^{m}[z^{i}(t)-s(t)]^{T}
\sum_{j=1}^{m}\theta_{i}\tilde{l}_{ij}P\Gamma[z^{j}(t)-s(t)]<0,
\end{align*}
which implies
\begin{align*}
\dot{V}(t)\le -\eta\sum_{i=1}^{m}\theta_{i}[z^{i}(t)-s(t)]^{T}[z^{i}(t)-s(t)],
\end{align*}
and
\begin{align*}
\int_{0}^{t}
\sum_{i=1}^{m}\theta_{i}[z^{i}(\kappa)-s(\kappa)]^{T}[z^{i}(\kappa)-s(\kappa)]d\kappa<\frac{1}{\eta}[V(0)-V(t)].
\end{align*}
Therefore,
$\lim_{t\rightarrow \infty}\sum_{i=1}^{m}\theta_{i}[z^{i}(t)-s(t)]^{T}[z^{i}(t)-s(t)]=0$.
Theorem is proved completely.
\end{proof}

\section{Applications to Consensus of Multi-agents Systems}
As applications, we discuss consensus of multi-agents,
\begin{align}
\frac{dz^i(t)}{dt}=Az^i(t)+c\sum\limits_{j=1}^ml_{ij}\Gamma z^j(t),~i=1,\cdots,m
\end{align}

Consider the model
\begin{align}\label{con}
\dot{z}(t)=Az(t)+Bu(t)
\end{align}
If the system (\ref{con}) is controllable, there are positive definite matrix $P$, such that
\begin{align}
PA+A^{T}P-2BB^{T}<0,
\end{align}
which is equivalent to the following QUAD condition
\begin{align*}%\label{Q}
[z_1-z_2]^{T}P\big\{[Az_1-Az_2]-P^{-1}BB^{T}[z_1-z_2]\big\}\le -\eta
[z_1-z_2]^{T}[z_1-z_2].
\end{align*}

By the results obtained in previous section, we can give
\begin{thm}\label{adapcontrol}
If the system (\ref{con}) is controllable, the distributive adaptive system
\begin{align*}
\left\{\begin{array}{ll}
\dot{z}^i(t)=&Az^i(t)
+\sum\limits_{j=1}^m w_{1j}(t)P^{-1}BB^{T} z^j(t)\\
\dot{w}_{ij}(t)=&\theta_{i}^{-1}
[z^{i}(t)-z^{j}(t)]^{T}BB^{T}[z^{i}(t)-z^{j}(t)],~~j\in \mathscr{N}(i),
\\
w_{ij}(0)=&l_{ij},
\end{array}\right.
\end{align*}
can reach consensus.

If $\dot{s}(t)=As(t)$, then following adaptive algorithm
\begin{align*}
\left\{\begin{array}{l}
\dot{z}^1(t)=Az^1(t)
+\sum\limits_{j=1}^m w_{1j}(t)P^{-1}BB^{T} z^j(t)-w_{10}(t) P^{-1}BB^{T}[z^{1}(t)-s(t)],\\
\dot{z}^i(t)=Az^i(t)+\sum\limits_{j=1}^m w_{ij}(t)P^{-1}BB^{T} z^j(t),~i\ne 1\\
\dot{w}_{ij}(t)=\theta_{i}^{-1}
[z^{i}(t)-s(t)]^{T}BB^{T}[z^{i}(t)-s(t)],~~j\in N(i),\\
\dot{w}_{10}(t)=\theta_{1}^{-1}[z^{1}(t)-s(t))]^{T}BB^{T}[z^{1}(t)-s(t)],\\
w_{ij}(0)=l_{ij}, w_{10}(0)=\varepsilon
\end{array}\right.
\end{align*}
can reach $z^i(t)-s(t)\rightarrow 0$, for all $i=1,\cdots,m$.
\end{thm}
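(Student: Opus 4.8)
\textbf{Proof proposal for Theorem \ref{adapcontrol}.}

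The plan is to reduce both assertions to Theorem \ref{adap1} and Theorem \ref{adappint} respectively, by verifying that the hypotheses of those theorems are met for the specific choices $\mathscr{F}(z)=Az$, $\Phi=P^{-1}BB^T$, and $\Gamma=P^{-1}BB^T$. First I would fix the positive definite $P$ guaranteed by controllability, satisfying $PA+A^TP-2BB^T<0$; by compactness of the unit sphere there is $\eta>0$ with $PA+A^TP-2BB^T\le -2\eta I$. Then for any $z_1,z_2$, writing $u=z_1-z_2$ and expanding $u^TP\{[Az_1-Az_2]-P^{-1}BB^Tu\}=\frac12 u^T(PA+A^TP-2BB^T)u\le -\eta u^Tu$, which is exactly Assumption \ref{a1} with the stated $P$, $\Phi=P^{-1}BB^T$, $\eta$. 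Note that here $P\Phi=BB^T$ is symmetric semi-positive definite.

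Next I would check Assumption \ref{a2} with $\Gamma=P^{-1}BB^T$. We have $P\Gamma=BB^T$, which is symmetric and positive semidefinite on all of $\mathbb{R}^n$, so the first half holds. For the second half, observe $P\Phi=BB^T=P\Gamma$, hence $\mathrm{ker}(P\Phi)^\perp=\mathrm{ker}(BB^T)^\perp=\mathrm{range}(BB^T)$, and $BB^T$ is automatically positive definite on its own range (if $BB^Tv=0$ for $v$ in the range, then $v^TBB^Tv=0$ forces $v\perp\mathrm{range}(BB^T)$, so $v=0$). Thus Assumption \ref{a2} is satisfied, and in fact the constant $c_1$ of Lemma \ref{le1} can be taken equal to $1$ since $P\Phi$ and $P\Gamma$ coincide.

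With both assumptions verified, the first system is precisely the instance of algorithm (\ref{model1a}) with $\mathscr{F}(z)=Az$ and $\Gamma=P^{-1}BB^T$, because the update law $\dot w_{ij}=\theta_i^{-1}[z^i-z^j]^TBB^T[z^i-z^j]=\theta_i^{-1}[z^i-z^j]^TP\Gamma[z^i-z^j]$ matches that in (\ref{model1a}); so Theorem \ref{adap1} gives $\sum_i\theta_i\|z^i(t)-\bar z(t)\|^2\to 0$, i.e. consensus. For the pinning/leader version, the second system is the instance of (\ref{adappin1}) with the same $\mathscr{F}$ and $\Gamma$, once one notices $\dot w_{10}=\theta_1^{-1}[z^1-s]^TBB^T[z^1-s]=\theta_1^{-1}[z^1-s]^TP\Gamma[z^1-s]$ and similarly for $\dot w_{ij}$; Theorem \ref{adappint} then yields $\sum_i\theta_i\|z^i(t)-s(t)\|^2\to 0$, hence $z^i(t)-s(t)\to 0$ for every $i$ since each $\theta_i>0$. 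The only genuine point requiring care — the ``main obstacle,'' though it is mild — is confirming that the controllability-derived Lyapunov inequality indeed produces a valid QUAD triple and that the degenerate case $\mathrm{ker}(BB^T)\ne\{0\}$ (i.e. $B$ not full row rank) is handled correctly; the observations above that $P\Phi=P\Gamma=BB^T$ and that $BB^T$ is positive definite on $\mathrm{range}(BB^T)$ dispose of it cleanly, so the theorem follows directly from the earlier results.
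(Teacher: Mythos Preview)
Your proposal is correct and follows essentially the same route as the paper: the paper's proof is the single line ``Let $\Gamma=P^{-1}BB^{T}$. Theorem is a direct consequence of Theorem \ref{adap1} and Theorem 2,'' relying on the QUAD verification already stated before the theorem. You supply more detail than the paper does---in particular the explicit check of Assumption \ref{a2} via $P\Phi=P\Gamma=BB^{T}$ and the positive-definiteness of $BB^{T}$ on its own range---but the argument is the same reduction to Theorems \ref{adap1} and \ref{adappint}.
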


Let $\Gamma=P^{-1}BB^{T}$. Theorem is a direct consequence of Theorem \ref{adap1} and Theorem 2.

\section{Anti-synchronization}
Next, we discuss anti-synchronization \cite{Anti1,Anti2,Wang,Chen2019},
\begin{align}\label{anti0a}
\dot{z}^{i}(t)=\mathscr{F}(z^{i}(t))
+c\sum_{j\ne i}^{m}|l_{ij}|\Gamma[\mathrm{sign}(l_{ij})z^{j}(t)-z^{i}(t)],
\end{align}
where all nodes are connected and can be split into two subgroups $\mathcal{V}_{1}$ and $\mathcal{V}_{2}$,  such that $l_{ij}\ge 0$ for all $i,j\in \mathcal{V}_{p}$ ($p=1,2$),  $l_{ij}\le 0$ for all $i\in \mathcal{V}_{p},j\in \mathcal{V}_{q}$, $p\ne q$.

Let $\hat{z}_{i}(t)=z^{i}(t)$, if $i\in \mathcal{V}_{1}$, and $\hat{z}^{i}(t)=-z_{i}(t)$, if $i\in \mathcal{V}_{2}$,
then (\ref{anti0a}) can be rewritten as
\begin{align}\label{anti1}
\left\{ \begin{array}{ll}
\dot{\hat{z}}^{i}(t)=\mathscr{F}(\hat{z}^{i}(t))
+c\sum_{j=1}^{m}l_{ij}^{*}\Gamma{\hat{z}}^{j}(t),~i\in \mathcal{V}_{1},~j\in \mathscr{N}(i), \\
\dot{\hat{z}}^{i}(t)=-\mathscr{F}(-\hat{z}^{i}(t))
+c\sum_{j=1}^{m}l_{ij}^{*}\Gamma{\hat{z}}^{j}(t),~i\in \mathcal{V}_{2},~j\in \mathscr{N}(i),
\end{array}
\right.
\end{align}
where $l_{ij}^{*}=|l_{ij}|$, if $i\ne j$, and $l_{ii}^{*}=-\sum_{j\ne i}^{m}|l_{ij}|$.
Therefore, anti-synchronization of (\ref{anti0a}) is equivalent to synchronization of $\hat{z}^{i}(t)$ for system (\ref{anti1}).

For matrix $L^{*}=(l_{ij}^{*})$, $\theta=[\theta_{1},\cdots,\theta_{m}]^{T}$ is the left eigenvector in Proposition \ref{p1}.

\begin{asu}\label{a3} For $\mathscr{F}(\cdot)$, suppose
\begin{align}\label{QA}
\left\{
\begin{array}{l}
(z_1-z_2)^{T}Q\big\{[\mathscr{F}(z_1)-\mathscr{F}(z_2)]-\Phi (z_1-z_2)\big\}\le -\eta(z_1-z_2)^{T}(z_1-z_2), \\
(z_1+z_2)^{T}Q\big\{[\mathscr{F}(z_1)+\mathscr{F}(z_2)]-\Phi (z_1+z_2)\big\}\le -\eta(z_1+z_2)^{T}(z_1+z_2),
\end{array}
\right.
\end{align}
where $Q$ is a positive definite matrix.
\end{asu}

In \cite{Chen2019}, it was proved that under Assumption \ref{a3}, (\ref{anti1}) can reach synchronization if $c$ is sufficiently large.

Based on previous discussion, the adaptive algorithm takes following form
\begin{align}
\left\{\begin{array}{ll}
\dot{\hat{z}}^{i}(t)=\mathscr{F}(\hat{z}^{i}(t))+\sum_{j=1}^{m}w_{ij}(t)\Gamma[
\hat{z}^{j}(t)-\hat{z}^{i}(t)],\\
\dot{w}_{ij}(t)=
\theta_{i}^{-1}[\hat{z}^{i}(t)-\hat{z}^{j}(t)]^{T}P\Gamma[\hat{z}^{i}(t)-\hat{z}^{j}(t)],~~
if~i\in \mathcal{V}_{1},j\in \mathscr{N}(i),\\
\dot{\hat{z}}^{i}(t)=-\mathscr{F}(-\hat{z}^{i}(t))+\sum_{j=1}^{m}w_{ij}(t)\Gamma[
\hat{z}^{j}(t)-\hat{z}^{i}(t)],\\
\dot{w}_{ij}(t)=\theta_{i}^{-1}[\hat{z}^{i}(t)-\hat{z}^{j}(t)]^{T}P\Gamma[\hat{z}^{i}(t)-\hat{z}^{j}(t)],~~
if~i\in \mathcal{V}_{2}, j\in \mathscr{N}(i),\\
{w}_{ij}(0)=l_{ij}^{*},~i,j=1,\cdots,m
\end{array}\right.\label{adaptiveanti1}
\end{align}

\begin{thm}
Under Assumptions \ref{a2} and \ref{a3}, anti-synchronization can be reached for adaptive system (\ref{adaptiveanti1}).
\end{thm}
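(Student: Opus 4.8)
The plan is to transcribe the proof of Theorem~\ref{adap1}, carried out on the transformed variables $\hat z^{i}(t)$ of (\ref{anti1}) for which, as observed above, anti-synchronization of (\ref{anti0a}) is precisely synchronization. Let $\theta=[\theta_{1},\dots,\theta_{m}]^{T}$ be the left eigenvector of $L^{*}$ from Proposition~\ref{p1}, set $\bar z(t)=\sum_{i=1}^{m}\theta_{i}\hat z^{i}(t)$, and take the candidate
\[
V(t)=\frac{1}{2}\sum_{i=1}^{m}\theta_{i}[\hat z^{i}(t)-\bar z(t)]^{T}P[\hat z^{i}(t)-\bar z(t)]+\frac{1}{4}\sum_{i\ne j}[\theta_{i}w_{ij}(t)-c\,\theta_{i}l^{*}_{ij}]^{2},
\]
with $c>0$ to be fixed later. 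First I would record the analogue of Lemma~\ref{le3}: since the update $\theta_{i}\dot w_{ij}=[\hat z^{i}-\hat z^{j}]^{T}P\Gamma[\hat z^{i}-\hat z^{j}]$ is symmetric under $i\leftrightarrow j$ and $\dot w_{ii}=-\sum_{j\in\mathscr{N}(i)}\dot w_{ij}$, the Laplacian built from $(w_{ij}(t))$ keeps zero row sums and keeps $\theta$ as its zero-eigenvalue left eigenvector for all $t>0$, because this holds at $t=0$ where it equals $L^{*}$. Differentiating $V$, the cross term involving $\dot{\bar z}$ vanishes since $\sum_{i}\theta_{i}(\hat z^{i}-\bar z)=0$, so $\dot V$ splits, exactly as in Theorem~\ref{adap1}, into a nonlinear part $\sum_{i}\theta_{i}(\hat z^{i}-\bar z)^{T}P\mathscr{F}_{i}(\hat z^{i})$, a coupling part and an adaptive part, where $\mathscr{F}_{i}:=\mathscr{F}$ for $i\in\mathcal{V}_{1}$ and $\mathscr{F}_{i}(\cdot):=-\mathscr{F}(-\cdot)$ for $i\in\mathcal{V}_{2}$.

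The one genuinely new ingredient is the nonlinear part. Expanding $\hat z^{i}-\bar z=\sum_{j}\theta_{j}(\hat z^{i}-\hat z^{j})$ and symmetrizing in $i\leftrightarrow j$ rewrites it as $\frac{1}{2}\sum_{i,j}\theta_{i}\theta_{j}[\hat z^{i}-\hat z^{j}]^{T}P[\mathscr{F}_{i}(\hat z^{i})-\mathscr{F}_{j}(\hat z^{j})]$, so it suffices to show, by a four-way case analysis on $(i,j)$, that
\[
[\hat z^{i}-\hat z^{j}]^{T}P\big[\mathscr{F}_{i}(\hat z^{i})-\mathscr{F}_{j}(\hat z^{j})\big]\le[\hat z^{i}-\hat z^{j}]^{T}P\Phi[\hat z^{i}-\hat z^{j}]-\eta[\hat z^{i}-\hat z^{j}]^{T}[\hat z^{i}-\hat z^{j}].
\]
For $i,j\in\mathcal{V}_{1}$ this is the first line of (\ref{QA}) with $Q=P$; for $i,j\in\mathcal{V}_{2}$ the bracket equals $-[\mathscr{F}(-\hat z^{i})-\mathscr{F}(-\hat z^{j})]$ while $[\hat z^{i}-\hat z^{j}]=-[(-\hat z^{i})-(-\hat z^{j})]$, so pulling out the minus sign reduces it again to the first line of (\ref{QA}) applied to $z_{1}=-\hat z^{i}$, $z_{2}=-\hat z^{j}$; for $i\in\mathcal{V}_{1}$, $j\in\mathcal{V}_{2}$ (and symmetrically for the reverse) the bracket is $\mathscr{F}(\hat z^{i})+\mathscr{F}(-\hat z^{j})$ and $\hat z^{i}-\hat z^{j}=\hat z^{i}+(-\hat z^{j})$, which is exactly the second, sum-type line of (\ref{QA}) with $z_{1}=\hat z^{i}$, $z_{2}=-\hat z^{j}$. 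Summing over $(i,j)$, using $\frac{1}{2}\sum_{i,j}\theta_{i}\theta_{j}(u^{i}-u^{j})^{T}M(u^{i}-u^{j})=\sum_{i}\theta_{i}(u^{i}-\bar u)^{T}M(u^{i}-\bar u)$ and then Lemma~\ref{le1} (available because Assumption~\ref{a2} holds with the same $\Phi$ and with $Q=P$), the nonlinear part is bounded by $c_{1}\sum_{i}\theta_{i}(\hat z^{i}-\bar z)^{T}P\Gamma(\hat z^{i}-\bar z)-\eta\sum_{i}\theta_{i}(\hat z^{i}-\bar z)^{T}(\hat z^{i}-\bar z)$, with $c_{1}$ the constant of (\ref{c1}).

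The coupling and adaptive parts are then handled verbatim as in Theorem~\ref{adap1}: Lemma~\ref{le2} together with the eigenvector invariance puts the coupling part in pairwise-difference form, the $c\,l^{*}_{ij}$ piece producing a negative contribution controlled by $\lambda_{2}((\Theta L^{*})^{s})<0$, and the $\theta_{i}w_{ij}-c\,\theta_{i}l^{*}_{ij}$ piece cancelling against the derivative of $\frac{1}{4}\sum(\cdot)^{2}$. Choosing $c>c_{1}(\max_{i}\theta_{i})\,|\lambda_{2}((\Theta L^{*})^{s})|^{-1}$ then gives $\dot V(t)\le-\eta\sum_{i}\theta_{i}[\hat z^{i}(t)-\bar z(t)]^{T}[\hat z^{i}(t)-\bar z(t)]$, so $V$ is nonincreasing and $\int_{0}^{\infty}\sum_{i}\theta_{i}[\hat z^{i}-\bar z]^{T}[\hat z^{i}-\bar z]\,dt\le\eta^{-1}V(0)<\infty$, whence $\sum_{i}\theta_{i}\|\hat z^{i}(t)-\bar z(t)\|^{2}\to0$; since $\theta_{i}>0$ this forces $\hat z^{i}(t)-\hat z^{j}(t)\to0$ for all $i,j$, and undoing the substitution ($\hat z^{i}=z^{i}$ on $\mathcal{V}_{1}$, $\hat z^{i}=-z^{i}$ on $\mathcal{V}_{2}$) gives $z^{i}-z^{j}\to0$ within each subgroup and $z^{i}+z^{j}\to0$ across them, i.e.\ anti-synchronization of (\ref{anti0a}). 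I expect the only real work to be bookkeeping: getting every sign in the four-way case analysis right so that the difference-type and sum-type inequalities of Assumption~\ref{a3} line up, and checking that its matrix $Q$ may indeed be taken equal to the $P$ that drives the adaptive law, Lemma~\ref{le1}, and Assumption~\ref{a2}; with that settled the argument is a transcription of Theorem~\ref{adap1}.
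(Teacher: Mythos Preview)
Your proposal is correct and follows the paper's overall strategy: same Lyapunov function, same eigenvector-invariance observation, same transcription of Theorem~\ref{adap1} for the coupling and adaptive parts, and same final integral argument. The only genuine difference is how you treat the nonlinear term. You expand $\hat z^{i}-\bar z=\sum_{j}\theta_{j}(\hat z^{i}-\hat z^{j})$, symmetrize, and run a four-way case split on $(i,j)$, invoking both lines of Assumption~\ref{a3} for within-group and cross-group pairs, and then collapse back via the identity $\tfrac12\sum_{i,j}\theta_{i}\theta_{j}(u^{i}-u^{j})^{T}M(u^{i}-u^{j})=\sum_{i}\theta_{i}(u^{i}-\bar u)^{T}M(u^{i}-\bar u)$. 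The paper instead exploits $\sum_{i}\theta_{i}(\hat z^{i}-\bar{\hat z})=0$ to insert $-\mathscr{F}(\bar{\hat z})$ for free and does only a two-way split on $i$: for $i\in\mathcal V_{1}$ it is the first line of (\ref{QA}) with $z_{1}=\hat z^{i}$, $z_{2}=\bar{\hat z}$, and for $i\in\mathcal V_{2}$ it is the second line with $z_{1}=-\hat z^{i}$, $z_{2}=\bar{\hat z}$. The paper's route is shorter and avoids the pairwise identity; your route is also valid and has the minor advantage that each inequality you apply is literally an instance of (\ref{QA}) at state values $\pm\hat z^{i}$, never at the auxiliary average $\bar{\hat z}$. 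Your closing remark about identifying $Q$ with $P$ is apt: the paper silently makes the same identification.
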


\proof
Define anti-synchronization state as
\begin{align}
\bar{\hat{z}}(t)=\sum_{i=1}^{m}\theta_{i}\hat{z}^{i}(t)=\sum\limits_{j\in \mathcal{V}_1}\theta_{j}z^{j}(t)-\sum_{j\in \mathcal{V}_2}\theta_{i}z^{j}(t),
\end{align}
and Lyapunov function takes form
\begin{align*}
V(t)=\frac{1}{2}\sum_{i=1}^{m}\theta_{i}[\hat{z}^{i}(t)-\bar{\hat{z}}(t)]^{T}
P[\hat{z}^{i}(t)-\bar{\hat{z}}(t)]+\frac{1}{4}\sum_{i,j=1}^{m}[\theta_{i}w_{ij}(t)-c\theta_{i}l_{ij}^{*}]^{2}.
\end{align*}

For $i\in \mathcal{V}_1$, from (\ref{QA}),
\begin{align*}
&[\hat{z}^{i}(t)-\bar{\hat{z}}(t)]^{T}P
\big[\mathscr{F}(\hat{z}^{i}(t))-\mathscr{F}(\bar{\hat{z}}(t))\big]\\
\le& [\hat{z}^{i}(t)-\bar{\hat{z}}(t)]^{T}P\Phi[\hat{z}^{i}(t)-\bar{\hat{z}}(t)]-\eta[\hat{z}^{i}(t)-\bar{\hat{z}}(t)]^{T}[\hat{z}^{i}(t)-\bar{\hat{z}}(t)].
\end{align*}

For $i\in \mathcal{V}_2$, from (\ref{QA}),
\begin{align*}
&[\hat{z}^{i}(t)-\bar{\hat{z}}(t)]^{T}P\big[-\mathscr{F}(-\hat{z}^{i}(t))-\mathscr{F}(\bar{\hat{z}}(t))\big]\\
\le& [\hat{z}^{i}(t)-\bar{\hat{z}}(t)]^{T}P\Phi[\hat{z}^{i}(t)-\bar{\hat{z}}(t)]-\eta[\hat{z}^{i}(t)-\bar{\hat{z}}(t)]^{T}[\hat{z}^{i}(t)-\bar{\hat{z}}(t)].
\end{align*}

By the same derivations as before, we can prove
$\dot{V}(t)\le -\eta\sum_{i=1}^{m}\theta_{i}[\hat{z}^{i}(t)-\bar{\hat{z}}(t)]^{T}
[\hat{z}^{i}(t)-\bar{\hat{z}}(t)]$,
and
\begin{align*}
\int_{0}^{t}
\sum_{i=1}^{m}\theta_{i}[\hat{z}^{i}(\kappa)-\bar{\hat{z}}(\kappa)]^{T}
[\hat{z}^{i}(\kappa)-\bar{\hat{z}}(\kappa)]d\kappa<\frac{1}{\eta}[V(0)-V(t)].
\end{align*}
Therefore, $\lim_{t\rightarrow \infty}\sum_{i=1}^{m}\theta_{i}[\hat{z}^{i}(t)-\bar{\hat{z}}(t)]^{T}[z^{i}(t)
-\bar{\hat{z}}(t)]=0$, anti-synchronization is realized.

\begin{rem}
In case $\mathscr{F}(-x)=-\mathscr{F}(x)$, such as $\mathscr{F}(x)=Ax$, (\ref{Q}) and (\ref{QA}) will be identical. The adaptive algorithm (\ref{adaptiveanti1}) becomes
\begin{align*}
\left\{\begin{array}{ll}
\dot{\hat{z}}^{i}(t)=\mathscr{F}(\hat{z}^{i}(t))+\sum_{j=1}^{m}w_{ij}(t)\Gamma[
\hat{z}^{j}(t)-\hat{z}^{i}(t)],\\
\dot{w}_{ij}(t)=
\theta_{i}^{-1}[\hat{z}^{i}(t)-\hat{z}^{j}(t)]^{T}
P\Gamma[\hat{z}^{i}(t)-\hat{z}^{j}(t)], ~j\in \mathscr{N}(i),\\
{w}_{ij}(0)=l_{ij}^{*},~i,j=1,\cdots,m
\end{array}\right.%\label{adaptiveanti1}
\end{align*}

\end{rem}

\section{Global synchronization analysis for distributed adaptive algorithm of nonlinear coupled complex networks}
In practice, we can not observe the state $z^{i}(t)$
directly. Instead, we only can obtain data
$\mathscr{G}(z^{i}(t))$, $i=1,\cdots,m$.
We need to use $\mathscr{G}(z^{i}(t))$ to synchronize the
uncoupled system, which means that the synchronization scheme is
nonlinear,
\begin{align*}
\frac{d z^{i}(t)}{dt}=\mathscr{F}(z^{i}(t),t)+\sum\limits_{j=1,j\neq
i}^{m}a_{ij} [\mathscr{G}(z^{j}(t))-\mathscr{G}(z^{i}(t))],
\end{align*}
where $\mathscr{G}(z)=[g_{1}(z_{1}),\cdots,g_{n}(z_{n})]^T$, for $z=[z_1,\cdots,z_n]^T$, and
satisfying $[g_{i}(u)-g_{i}(v)]>\beta [u-v]$, $i=\onetom$, $\beta>0$, for any $u\ne v$.

\begin{thm}
Suppose matrix $L$ is symmetric, the QUAD-condition $(I_n, I_n, \eta)$ holds, the following algorithm
\begin{align}\label{adaptive}
\left\{\begin{array}{l}
\dot{x}^{i}(t)=\mathscr{F}(z^{i}(t))+\sum_{j=1}^{m}w_{ij}(t)
[\mathscr{G}(z^{j}(t))-\mathscr{G}(z^{i}(t))];\\
\dot{w}_{ij}(t)=
\rho[\mathscr{G}(z^{i}(t))-\mathscr{G}(z^{j}(t)]^{T}[\mathscr{G}(z^{i}(t))-\mathscr{G}(z^{j}(t))],~~j\in \mathscr{N}(i);\\
w_{ij}(0)=l_{ij}
\end{array}\right.
\end{align}
can realize the synchronization, where $\rho\beta<1$.
\end{thm}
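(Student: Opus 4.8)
\emph{Proof sketch.}
The plan is to carry over the argument of Theorem~\ref{adap1}, the only genuinely new point being the way the nonlinearity $\mathscr{G}$ inside the coupling is absorbed. Since $L$ is symmetric, Proposition~\ref{p1} gives $\theta_i=1/m$ for all $i$. First I would record the analogue of Lemma~\ref{le3}: the update rule for $w_{ij}$ is symmetric under $i\leftrightarrow j$ and $w_{ij}(0)=l_{ij}=l_{ji}$, hence $W(t)=(w_{ij}(t))$ remains symmetric with zero row and column sums for every $t\ge0$, its off-diagonal entries staying nonnegative and nondecreasing; in particular $\sum_i w_{ij}(t)=\sum_j w_{ij}(t)=0$. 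Putting $\bar z(t)=\frac1m\sum_i z^i(t)$ and $\delta^i(t)=z^i(t)-\bar z(t)$, the zero-column-sum property kills the coupling term in $\dot{\bar z}$, so $\dot{\bar z}=\frac1m\sum_i\mathscr{F}(z^i)$, $\sum_i\delta^i\equiv0$, and synchronization amounts to $\sum_i\|\delta^i(t)\|^2\to0$. As Lyapunov function I would take
\[
V(t)=\frac{1}{2m}\sum_{i=1}^{m}\|\delta^i(t)\|^2+\frac{1}{4m}\sum_{i\ne j}\big(w_{ij}(t)-c\,l_{ij}\big)^2 ,
\]
with $c>0$ fixed at the end.

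Differentiating, the state block gives $\frac1m\sum_i(\delta^i)^T\dot\delta^i$. Using $\sum_i\delta^i=0$ and the QUAD-condition $(I_n,I_n,\eta)$, the $\mathscr{F}$-part is $\sum_i(\delta^i)^T[\mathscr{F}(z^i)-\mathscr{F}(\bar z)]\le(1-\eta)\sum_i\|\delta^i\|^2$. The coupling part equals $\frac1m\sum_{i,j}w_{ij}(\delta^i)^T\mathscr{G}(z^j)$, and since $\frac1m W$ is symmetric with nonnegative off-diagonal entries and zero row sums, Lemma~\ref{le2} with $u^i=\delta^i$, $v^j=\mathscr{G}(z^j)$ (and $\delta^i-\delta^j=z^i-z^j$) rewrites it as $-\frac1{2m}\sum_{i,j}w_{ij}(z^i-z^j)^T[\mathscr{G}(z^i)-\mathscr{G}(z^j)]$. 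The weight block contributes $\frac{\rho}{2m}\sum_{i\ne j}(w_{ij}-c\,l_{ij})\|\mathscr{G}(z^i)-\mathscr{G}(z^j)\|^2$. Regrouping, $\dot V$ is bounded by $\frac{1-\eta}{m}\sum_i\|\delta^i\|^2$ plus $\frac1{2m}\sum_{i\ne j}w_{ij}\big[\rho\|\mathscr{G}(z^i)-\mathscr{G}(z^j)\|^2-(z^i-z^j)^T(\mathscr{G}(z^i)-\mathscr{G}(z^j))\big]$ minus $\frac{\rho c}{2m}\sum_{i\ne j}l_{ij}\|\mathscr{G}(z^i)-\mathscr{G}(z^j)\|^2$.

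The crucial step — which I expect to be the main obstacle — is to make the middle, $w_{ij}$-weighted, sum nonpositive and to let the last sum dominate the expansive term $\frac{1-\eta}{m}\sum_i\|\delta^i\|^2$. Writing things componentwise and letting $q_k$ denote the difference quotient of $g_k$ between the two scalar arguments, the bracket equals $\sum_k(z_k^i-z_k^j)(g_k(z_k^i)-g_k(z_k^j))(\rho q_k-1)$, and each $(z_k^i-z_k^j)(g_k(z_k^i)-g_k(z_k^j))\ge0$ because every $g_k$ is strictly increasing; so, once one knows $\rho q_k<1$ for every $k$ (this is precisely where the control on $\mathscr{G}$ and the hypothesis $\rho\beta<1$ must be invoked), the whole $w_{ij}$-weighted sum is $\le0$. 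For the last sum, the bound $g_k(u)-g_k(v)>\beta(u-v)$ gives $\|\mathscr{G}(z^i)-\mathscr{G}(z^j)\|^2>\beta^2\|z^i-z^j\|^2$, and Lemma~\ref{le2} applied to $L$ (equivalently, the spectral gap $\lambda_2<0$ of $L$) gives $\sum_{i\ne j}l_{ij}\|z^i-z^j\|^2\ge2|\lambda_2|\sum_i\|\delta^i\|^2$; choosing $c$ with $\rho c\,\beta^2|\lambda_2|\ge1$ then yields $\dot V\le-\frac{\eta}{m}\sum_i\|\delta^i\|^2\le0$. With $V$ nonincreasing, $\delta^i$ and $w_{ij}$ are bounded, $\int_0^\infty\sum_i\|\delta^i(t)\|^2\,dt<\infty$, and since boundedness of the states makes $t\mapsto\sum_i\|\delta^i(t)\|^2$ uniformly continuous, Barbalat's lemma gives $\sum_i\|\delta^i(t)\|^2\to0$: synchronization.

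The hard part is thus the sign analysis of the weight-dependent terms, i.e.\ verifying from the hypotheses on $\mathscr{G}$ and $\rho\beta<1$ that $\rho q_k<1$ (so the middle sum is nonpositive) while keeping the slope of $\mathscr{G}$ bounded below (so the $c$-term controls the QUAD term). Everything else — the invariance of the left/right null vectors of $W(t)$, the Lemma~\ref{le2} reduction, and the final calibration of $c$ through the second eigenvalue of $L$ — is routine, exactly as in Theorem~\ref{adap1}.
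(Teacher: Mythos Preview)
Your scheme is essentially the paper's: the same Lyapunov function (up to the harmless factor $1/m$ coming from $\theta_i=1/m$), the same use of Lemma~\ref{le2} to convert the coupling and weight-update contributions into pairwise sums, and the same endgame via the spectral gap of $L$ followed by integrability of $\sum_i\|\delta^i\|^2$. The only cosmetic difference is that the paper bounds the two $w_{ij}$-weighted pieces separately (its $I_2$ and $I_3$) before combining them, whereas you group them into a single bracket first; the algebra is the same.

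You have, however, correctly located a genuine gap rather than a routine step. With $q_k$ the difference quotient of $g_k$, your bracket equals $\sum_k(\rho q_k-1)\,q_k\,(z^i_k-z^j_k)^2$, and this is nonpositive precisely when every $\rho q_k\le 1$, i.e.\ when the slopes of the $g_k$ are bounded \emph{above} by $1/\rho$. The stated hypothesis supplies only the \emph{lower} bound $q_k>\beta$, and the condition $\rho\beta<1$ is then on the wrong side: it cannot force $\rho q_k<1$. The paper's own argument has exactly the same weakness: its estimate $I_3\le-\rho\beta^{2}\sum_{i,j}w_{ij}(\delta^i)^T\delta^j$ would require $\|\mathscr{G}(z^i)-\mathscr{G}(z^j)\|\le\beta\|z^i-z^j\|$, i.e.\ an upper slope bound, not the assumed lower one. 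So the obstacle you flag is real in both your sketch and the paper; what is actually needed is a two-sided control $\beta\le q_k\le\bar\beta$ together with $\rho\bar\beta\le 1$ (the lower bound $\beta$ is still used, exactly as you do, to make the $c\,l_{ij}$-term dominate the QUAD term). With that additional assumption in hand, the rest of your outline goes through verbatim and matches the paper.
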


Clearly, $\dot{w}_{ij}(t)=\dot{w}_{ji}(t)$, which implies that $(w_{ij}(t))$ is symmetric for all $t>0$.

\begin{proof}
Define
\begin{align*}
V(t)=&\frac{1}{2}\sum_{i=1}^{m}[z^{i}(t)-\bar{z}(t)]^{T}
[z^{i}(t)-\bar{z}(t)]+\frac{1}{4}\sum_{i=1}^{m}[w_{ij}(t)-cl_{ij}]^{2}.
\end{align*}
Based on the equality (\ref{Basic}), we have
\begin{align*}
&\sum_{i=1}^{m}[z^{i}(t)-\bar{z}(t)]^{T}
\sum_{j=1}^{m}w_{ij}(t)[\mathscr{G}(z^{j}(t))-\mathscr{G}(\bar{z}(t))]\\
=&-\frac{1}{2}\sum_{i,j=1}^{m}[z^{i}(t)-z^{j}(t)]^{T}w_{ij}(t)[\mathscr{G}(z^{i}(t))-\mathscr{G}(z^{j}(t))],
\end{align*}
and differentiate it,
\begin{align*}
\dot{V}(t)
=&\sum_{i=1}^{m}[z^{i}(t)-\bar{z}(t)]^{T}[\mathscr{F}(z^{i}(t))-\mathscr{F}(\bar{z}(t))]\\
&+\sum_{i,j=1}^{m}[z^{i}(t)-\bar{z}(t)]^{T}
w_{ij}(t)[\mathscr{G}(z^{j}(t))-\mathscr{G}(\bar{z}(t))]
\nonumber\\
&+\frac{\rho}{2}\sum_{i,j=1}^{m}
w_{ij}(t)[\mathscr{G}(z^{i}(t))-\mathscr{G}(z^{j}(t))]^{T}
[\mathscr{G}(z^{i}(t))-\mathscr{G}(z^{j}(t))]
\nonumber\\
&-\frac{c\rho}{2}\sum_{i\ne j}
l_{ij}[\mathscr{G}(z^{i}(t))-\mathscr{G}(z^{j}(t)]^{T}[\mathscr{G}(z^{i}(t))-\mathscr{G}(z^{j}(t))]
\nonumber\\
=&I_{1}(t)+I_{2}(t)+I_{3}(t)+I_{4}(t).
\end{align*}
By (\ref{Q}), we have
\begin{align*}
I_{1}(t)
&\le \sum_{i=1}^{m}[z^{i}(t)-\bar{z}(t)]^{T}
[z^{i}(t)-\bar{z}(t)]-\eta \sum_{i=1}^{m}[z^{i}(t)-\bar{z}(t)]^{T}[z^{i}(t)-\bar{z}(t)],\\
I_{2}(t)
&=-\frac{1}{2}\sum_{i,j=1}^{m}[z^{i}(t)-z^{j}(t)]^{T}
w_{ij}(t)[\mathscr{G}(z^{i}(t))-\mathscr{G}(z^{j}(t))]
\nonumber\\
&\le -\frac{\beta}{2}\sum_{i,j=1}^{m}[z^{i}(t)-z^{j}(t)]^{T}
w_{ij}(t)[z^{i}(t)-z^{j}(t)]\\
&=\beta\sum_{i,j=1}^{m}[z^{i}(t)-\bar{z}(t)]^{T}
w_{ij}(t)[z^{i}(t)-\bar{z}(t)],\\
I_{3}(t)
%&=&\rho\sum_{i,j=1}^{m}
%w_{ij}(t)(\mathscr{G}(z^{i}(t))-\mathscr{G}(z^{j}(t))^{T}\big\mathscr{G}(\mathscr{G}(z^{i}(t))-\mathscr{G}(z^{j}(t))\bigg)
%\nonumber\\
&\le -\rho\beta^{2}\sum_{i,j=1}^{m}[z^{i}(t)-\bar{z}(t)]^{T}
w_{ij}(t)[z^{i}(t)-\bar{z}(t)],\\
I_{4}(t)
%&=&-c\rho\sum_{i,j=1}^{m}
%l_{ij}(\mathscr{G}(z^{i}(t))-\mathscr{G}(z^{j}(t))^{T}\big\mathscr{G}(\mathscr{G}(z^{i}(t))-\mathscr{G}(z^{j}(t))\bigg)
%\nonumber\\
&\le c\rho\beta^{2}\sum_{i,j=1}^{m}[z^{i}(t)-\bar{z}(t)]^{T}
l_{ij}[z^{i}(t)-\bar{z}(t)].
\end{align*}
In case $\rho\beta<1$, we have
$I_{2}(t)+I_{3}(t)<0$. Pick a sufficiently large $c$, we can prove
$$\dot{V}(t)<I_{1}(t)+I_{4}(t)<-\eta \sum_{i=1}^{m}[z^{i}(t)-\bar{z}(t)]^{T}[z^{i}(t)-\bar{z}(t)].$$
The rest follows the proof of Theorem \ref{adap1}.
\end{proof}

\section{Conclusions}
In this paper, we discuss the distributive adaptive synchronization of complex networks, consensus of multi-agents with or without pinning controller. New distributive adaptive algorithms are proposed. It is explored that the left eigenvector $\theta=[\theta_{1},\cdots,\theta_{m}]^{T}$ corresponding to eigenvalue $0$ of the coupling matrix $L$ plays a key role. As direct consequences, distributive adaptive algorithms for consensus, anti-synchronization with or without pinning controller are given, too. Of course, the theoretical results obtained in this paper can be applied potentially in real world complex networks, such as genetic regulatory networks, sensor networks, etc.

\end{document}